\DeclareUnicodeCharacter{2212}{-}
\documentclass[11pt]{article} % 11-point font

\usepackage[backend=bibtex,style=alphabetic,backref=true,,maxnames=99]{biblatex}
\addbibresource{main.bib}

% Margins
\usepackage[letterpaper, margin=1in]{geometry} 

\title{Modifications of Quantum Computation and\\ Adaptive Queries to $\mathsf{PP}$\footnotetext{This work is supported by US
Department of Energy (grant no DE-SC0023179) and partially supported by US National Science Foundation
(award no 1954311).}\footnotetext{We thank Dorian Rudolph, Yuki Takeuchi and Thomas Huffstutler for helpful discussions and the reviewers for helpful comments.}}

\author{David Miloschewsky\\\small{Department of Computer Science, }\\
\small{Stony Brook University}\\
\small{dmiloschewsk@cs.stonybrook.edu}\and Supartha Podder
\\
\small{Department of Computer Science, }\\
\small{Stony Brook University}\\\small{supartha@cs.stonybrook.edu}}

\date{}

% Imports
\usepackage{amssymb}
\usepackage{amsmath}
\usepackage{amsthm}
\usepackage{xspace}
\usepackage{physics}
\usepackage{xcolor}
\usepackage{tikz}
\usetikzlibrary{arrows.meta,positioning,calc,matrix,fit,bending}
\usepackage{quantikz}
\usepackage{enumitem}
\usepackage{mathtools}
\usepackage{algpseudocode}
\usepackage{diagbox}
\usepackage{hyperref}
\usepackage[capitalise,noabbrev,nameinlink]{cleveref}
\usepackage{multirow}
\usepackage{dirtytalk}

% Complexity Classes
\newcommand{\compclass}[2][]{\ensuremath{\mathsf{#2}^{#1}}\xspace}
\newcommand{\pdqp}[0]{\compclass{PDQP}}
\newcommand{\corrbqp}[1][]{\compclass{CorrBQP_{#1}}}
\newcommand{\postbqp}[0]{\compclass{PostBQP}}
\newcommand{\bqp}{\compclass{BQP}}

\newcommand{\cocorrbqp}[0]{\compclass{coCorrBQP}}
\newcommand{\corrbqpsc}[2]{\compclass{CorrBQP(#1, #2)}}
\newcommand{\corrbqporacle}[1]{\ensuremath{\compclass[#1]{CorrBQP}_{\text{classical}}}}

\newcommand{\countinghierarchy}[0]{\compclass{CH}}
\newcommand{\countinghierarchylevel}[1]{\compclass{C_{#1}P}}
\newcommand{\pp}[0]{\compclass{PP}}
\newcommand{\sharpp}[0]{\ensuremath{\compclass{\# P}}\xspace}
\newcommand{\bpp}[0]{\compclass{BPP}}
\newcommand{\posteqp}[0]{\compclass{PostEQP}}
\newcommand{\adposteqp}[0]{\compclass{AdPostEQP}}
\newcommand{\postbqpclassical}[0]{\ensuremath{\postbqp_{\textup{classical}}}\xspace}
\newcommand{\postbpp}[0]{\compclass{PostBPP}}
\newcommand{\pspace}[0]{\compclass{PSPACE}}

\newcommand{\majbqp}{\compclass{MajBQP}}
\newcommand{\admajbqp}{\compclass{AdMajBQP}}
\newcommand{\eqp}{\compclass{EQP}}
\newcommand{\rweqp}{\compclass{RwEQP}}
\newcommand{\pdeqp}{\compclass{PDEQP}}
\newcommand{\ceqp}{\compclass{CEQP}}
\newcommand{\correqp}{\compclass{CorrEQP}}
\newcommand{\majeqp}{\compclass{MajEQP}}
\newcommand{\pqp}{\compclass{PQP}}
\newcommand{\adpdqp}{\compclass{AdPDQP}}

% Operators

\newcommand{\adpostbqp}[0]{\compclass{AdPostBQP}}

\newcommand{\ratsumdeg}[1][]{\ensuremath{\text{trdeg}_{#1}}\xspace}
\newcommand{\adpostq}[1]{\ensuremath{\text{AdPostQ}_{#1}}}
\newcommand{\postq}[1]{\ensuremath{\text{PostQ}_{#1}}}
\newcommand{\ratdeg}[1][]{\ensuremath{\text{rdeg}_{#1}}\xspace}
\DeclareMathOperator{\E}{\mathbb{E}}

% Theorems
\newtheorem{theorem}{Theorem}[section]
\newtheorem{corollary}[theorem]{Corollary}
\newtheorem{lemma}[theorem]{Lemma}
\newtheorem{definition}[theorem]{Definition}

\newtheorem{observation}[theorem]{Observation}

\newtheorem{claim}[theorem]{Claim}

\newcommand\numberthis{\addtocounter{equation}{1}\tag{\theequation}}

% Problems
\newcommand{\adaf}[2]{\ensuremath{\textup{Ada}_{#1, #2}}}

\newcommand{\res}[1]{\hyperref[result:#1]{Result~\ref*{result:#1}}}
\newcommand{\tabl}[1]{\hyperref[table:#1]{Table~\ref*{table:#1}}}
\definecolor{darkred}{RGB}{139, 0, 0}
\begin{document}

\maketitle

\begin{abstract}
In 2004, Aaronson~\cite{pp_postbqp} introduced the complexity class $\mathsf{PostBQP}$~($\mathsf{BQP}$ with postselection) and showed that it is equal to $\mathsf{PP}$. 
Following their line of work, we introduce two new complexity classes. The first, $\mathsf{CorrBQP}$, is a modification of $\mathsf{BQP}$ which has the power to perform \emph{correlated measurements}, i.e. measurements that output the same value across a partition of registers. The second, $\mathsf{MajBQP}$, augments $\mathsf{BQP}$ with the ability to collapse a register to its most likely measurement outcome. Specifically, we consider two variants, $\mathsf{MajBQP}$ and $\mathsf{AdMajBQP}$, where the latter may perform intermediate measurements. We exactly characterize the computational power of the models,
\begin{align*}
    &\mathsf{CorrBQP} = \mathsf{AdMajBQP} = \mathsf{BPP}^{\mathsf{PP}} &\mathsf{MajBQP} = \mathsf{P}^{\mathsf{PP}}.
\end{align*}

In fact, we show that other metaphysical modifications of $\mathsf{BQP}$, such as $\mathsf{CBQP}$ (i.e. $\mathsf{BQP}$ with the ability to clone arbitrary quantum states, introduced in~\cite{grover_search_no_signaling_principle, rewindable_bqp}), are also equal to $\mathsf{BPP}^{\mathsf{PP}}$. We show that $\mathsf{CorrBQP}$ and $\mathsf{MajBQP}$ are self-low with respect to classically-accessible queries. In contrast, if they were self-low under quantumly-accessible queries, the counting hierarchy would collapse. 
Furthermore, we introduce a variant of rational degree that lower-bounds the query complexity of $\mathsf{BPP}^{\mathsf{PP}}$. Lastly, we extend the adversary lower-bounding technique to $\mathsf{AdPDQP}$, $\mathsf{BQP}$ with the ability to sample the current state of an algorithm with collapsing it and adapt the computation based on the samples.
\end{abstract}

\section{Introduction}

The study of quantum computing spurred the study of various quantum complexity classes.
Starting with the introduction of \bqp, the quantum analogue of \compclass{P}, numerous quantum counterparts to standard classical complexity classes, such as \compclass{QMA}, \compclass{QIP} and \compclass{QAC_0} (which respectively correspond to \compclass{NP}, \compclass{IP}, and \compclass{AC_0}), have been defined. 
One question about these classes is how they relate to their classical counterparts and whether such connections can reveal new structural insights or lead to progress on foundational questions in computational complexity.

One such results is $\postbqp = \pp$~\cite{pp_postbqp}, where \postbqp denotes \bqp with the ability to postselect on any specific measurement outcome, even that occurs with exponentially small probability. This result led to a much simpler proof that \pp is closed under intersection and shows how exploring alternative formulations of \bqp can potentially yield new insights into the structure and limitations of other complexity classes. 

Our work contributes to this exploration by introducing two modifications of \bqp. We find that our models exactly capture computation with access to a \pp oracle, address the space between \pp and \pspace, and discuss the power of classically-accessible queries. Next, let us describe the two models we introduce.

\subsection{Models}

Let us shortly describe the computational models we study. A formal definition of these models may be found in~\cref{section:computational_models}.

\subsubsection{Correlated Measurements}

\begin{figure}
    \centering
    \resizebox{\linewidth}{!}{\begin{tikzpicture}[
  font=\Large,
  >=Latex,
  arr/.style={->, thick},
  box/.style={draw, rounded corners=2pt, inner sep=7pt, align=center},
  num/.style={font=\large, fill=white, inner sep=1pt},
]

% ---------------- TOP ROW ----------------
% (1) Input
\node[box, text width=0.45\linewidth] (in) {$
\ket{\psi}
=
\Big(\sqrt{\tfrac23}\ket{0}_L+\sqrt{\tfrac13}\ket{1}_L\Big)
\otimes
\Big(\sqrt{\tfrac15}\ket{0}_F+\sqrt{\tfrac45}\ket{1}_F\Big)
$};

% (2) Expanded
\node[box, right=12mm of in, text width=0.45\linewidth] (exp) {$
\begin{aligned}
\ket{\psi}=\;&
{\color{green!45!black}\sqrt{\tfrac{2}{15}}\ket{00}}
+{\color{gray!65}\sqrt{\tfrac{8}{15}}\ket{01}}\\
&+
{\color{gray!65}\sqrt{\tfrac{1}{15}}\ket{10}}
+{\color{green!45!black}\sqrt{\tfrac{4}{15}}\ket{11}}
\end{aligned}
$};

\draw[arr] (in.east) -- (exp.west);
% (3) Kept
\node[box, below=16mm of exp, xshift=-55mm, text width=0.42\linewidth] (keep) {$
\ket{\psi_{\mathrm{keep}}}
=
{\color{green!45!black}\sqrt{\tfrac{2}{15}}\ket{00}}
+
{\color{green!45!black}\sqrt{\tfrac{4}{15}}\ket{11}}
$};

% (4) Bal
\node[box, right=12mm of keep, text width=0.34\linewidth] (rw) {$
\ket{\psi_{\mathrm{bal}}}
=
{\color{green!45!black}\sqrt{\tfrac23}\ket{00}}
+
{\color{green!45!black}\sqrt{\tfrac13}\ket{11}}
$};

% (5) Measurement gate (quantikz)
\node[inner sep=0pt, right=10mm of rw] (meas) {
  \begin{quantikz}[column sep=1mm, row sep=0mm]
    & \meter{} \qw
  \end{quantikz}
};

% (6) Outputs
\node[box, right=10mm of meas, yshift=8mm, text width=0.26\linewidth] (o0) {$\ket{00}\ \text{w.p. }\tfrac23$};
\node[box, right=10mm of meas, yshift=-8mm, text width=0.26\linewidth] (o1) {$\ket{11}\ \text{w.p. }\tfrac13$};

% ---------------- CONNECTORS ----------------
\draw[arr] (exp.south) -- ++(0,-6mm) -| node[num, pos=0.55] {(1)} (keep.north);
\draw[arr] (keep.east) -- node[num, midway] {(2)} (rw.west);
\draw[arr] (rw.east) -- (meas.west);
\draw[arr] (meas.east) -- (o0.west);
\draw[arr] (meas.east) -- (o1.west);

\end{tikzpicture}}
    \caption{Visualization of a correlated measurement where the register $L$ is the leader and $F$ is a follower. Step (1) represents the conditioning on \emph{correlated} values in the registers, while Step (2) represents the weight redistribution based on the leader register $L$.}
    \label{fig:model_explanation}
\end{figure}

First, we introduce \corrbqp, a modified version of \bqp enhanced with the power to perform \emph{correlated measurements}. In summary, a correlated measurement is an operation where we obtain the same measurement outcome across multiple registers and the probability distribution of each outcome is dictated by a specific register. A visualization of this process may be seen in~\cref{fig:model_explanation}.

In more detail, a correlated measurement is the operation which takes two inputs, a state $\ket{\psi}$ and a partition over a subset registers $p = (p_1, \dots p_k)$ where $\forall i, j\in [k]$, $\abs{p_i}=\abs{p_j}$. Note that we are assuming the both the partitions and the registers within each partition are ordered. The first partitioned set $p_1$ is denoted as the \emph{leader set}, while the rest are \emph{followers}. In~\cref{fig:model_explanation}, the registers $L$ denote the leader set, while $F$ denotes the followers. The correlated measurement proceeds in three steps. First, we \say{discard} any weight of $\ket{\psi}$ over registers with different values (Step~(1) in~\cref{fig:model_explanation}). Next, we adjust the weight distribution in order to follow the leader register (Step~(2)). Finally, we measure the state.

At first, \corrbqp might \say{trivially} appear contained with \postbqp, but that is false. It is true that we may obtain the same measurement outputs using postselection (Simulate Step (1)). The caveat is that we don't know how to produce the same state as in Step (2). Suppose that we are performing a correlated measurement on $k$ states, each being $\alpha \ket{0} + \beta\ket{1}$. We will observe $\ket{0}$ with probability $\abs{\alpha}^2$. On the other hand, if we postselect on obtaining the same output across all $k$ registers and measure afterwards, we will measure $\ket{0}$ with probability $\abs{\alpha^k}^2/(\abs{\alpha^k}^2 + \abs{\beta^k}^2)$. However, \corrbqp may simulate \postbqp by setting the leader register to $\ket{0}$.

\subsubsection{Quantum Majority}

Second, we introduce \majbqp, \bqp with the ability to apply \emph{quantum majority} gates. A quantum majority gate is a single-qubit gate which, when applied to the state $\ket{\psi}=\alpha\ket{0} + \beta\ket{1}$, collapses to $\ket{0}$ if $\abs{\alpha}^2 \geq \abs{\beta}^2$ and $\ket{1}$ otherwise. Note that we normalize the state after the application of the majority gate, similar to postselection. Similarly to \corrbqp, we may observe that $\postbqp \subseteq \majbqp$. This is due to the equivalence $\postbqp = \compclass{PQP}$~\cite{watrous_quantum_computational_complexity} (\compclass{PQP} is \bqp with unbounded error) and the fact that we may apply a quantum majority gate at the end of the computation. Lastly, we consider two versions of the class based on partial measurements. If the quantum circuit may perform intermediate measurements, we call the class \admajbqp, while if it cannot, we denote it by \majbqp.

\subsection{Results}

\begin{figure}
    \centering
    \begin{tikzpicture}
    %\node (nonadaptive) at (0,0) {\nonadaptivebqpname(\textcolor{orange}{Non-adaptive queries})};
    \node (bqp) at (3,-2) {\bqp};
    \node (pp) at (2,0) {$\compclass{PP}=\postbqp =\compclass{PQP}$};
    \node (ph) at (6,0) {\compclass{PH}};
    \node (p_pp) at (3, 1.5) {$\compclass[\compclass{PP}]{P}\color{red}=\color{black} \compclass{MajBQP}$};
    \node (qcph) at (6, 3.2) {\compclass{QCPH}};
    \node (bpp_pp) at (2,3.2) {$\compclass[\compclass{PP}]{BPP} \color{red}=\color{black} \compclass{CorrBQP} \color{red}=\color{black}  \admajbqp{}^\dagger$};
    \node (pspace) at (3.5, 6.3) {\compclass{PSPACE}};
    \node (pdqp) at (-2,0) {\compclass{PDQP}};
    \node (adpdqp) at (0,1.5) {\compclass{AdPDQP}};
    \node (countinghierarchy) at (3.8,5) {\countinghierarchy};

    % Known containments
     \foreach \source/\target in {bqp/pp, bqp/qcph, bqp/pdqp, pp/p_pp, ph/p_pp, ph/qcph, p_pp/bpp_pp, qcph/countinghierarchy, bpp_pp/countinghierarchy, countinghierarchy/pspace, pdqp/adpdqp,  adpdqp/bpp_pp}
    \draw[->] (\source) -- (\target);
\end{tikzpicture}
    \caption{A map of relationships between the complexity classes discussed. Given two classes $C$ and $D$, $C\rightarrow D$ indicates $C\subseteq D$. The \textcolor{red}{red} equal signs $\color{red}=$ signify our results. ${}^\dagger$See Result 1 for all classes equal to $\compclass[\pp]{BPP}$.}
    \label{fig_complexity_class_spider}
\end{figure}

Let us describe our results.

\begin{quote}
    \textbf{Result 1:} We show that $\majbqp = \compclass[\pp]{P} $ and $\corrbqp = \admajbqp = \compclass[\pp]{BPP} = \compclass{CBQP} = \adpostbqp = \compclass{RwBQP}$(\cref{definition:adpostbqp};~\cref{thm:equality_theorem}).
\end{quote}

We begin by exactly characterizing \corrbqp, \majbqp and \admajbqp as computation with access to a \pp oracle. We find that the same characterization holds for other modifications of \bqp, such as \compclass{CBQP}, \adpostbqp, and \compclass{RwBQP}.\footnote{\compclass{CBQP} is \bqp with the ability to copy any state;  \adpostbqp is \postbqp with the ability to perform intermediate measurements; \compclass{RwBQP} is \bqp with the ability to uncompute measurements.} As the best prior upper-bounds for these classes were \pspace, and \compclass[\pp]{BPP} is not believed to be equal to \pspace, this result indicates that their computational power is only slightly stronger than \postbqp.
Furthermore, we find that \corrbqp is self-low with respect to classical queries, a quality which \postbqp is not known to possess. Additionally, to build intuition for \corrbqp, we describe a $O(\log n)$-query algorithm to solve Parity (\cref{lem:corrbqp_parity}). This is in contrast to \postbqp which requires $\Omega(n)$ queries to solve the problem~\cite{MdW14,IJK+25}. Finally, Result 1 highlights the power of intermediate measurements. In particular, the computational gaps between \admajbqp and \majbqp, and between \adpostbqp and \postbqp, are naturally explained by the presence or absence of intermediate measurements.

We use this result to study the counting hierarchy, \countinghierarchy. It is defined as $\countinghierarchy = \cup_{i\in\mathbb{N}} \countinghierarchylevel{i}$ where $\countinghierarchylevel{1} = \pp$ and $\countinghierarchylevel{i} = \compclass[\countinghierarchylevel{i-1}]{PP}$.\footnote{Alternatively, one may look at \countinghierarchy as a version of the polynomial hierarchy, \compclass{PH}, where the $\exists$ and $\forall$ quantifiers are replaced by a counting quantifier $\mathcal{C}$, a quantifier which asks whether the ratio of computational paths an \compclass{NP} machine accepts is over $1/2$.} 
Two reasons motivate  the study of \countinghierarchy. The first is the seminal result of Toda which shows that $\compclass{PH}\subseteq \compclass[\pp]{P}$, placing it between the first and second levels of \countinghierarchy~\cite{toda_theorem}. Secondly, a variety of interesting problems are complete for levels of \countinghierarchy. Specifically, $\textup{MAJ-MAJ-SAT}$ is $\compclass[\pp]{PP}$-complete, $\textup{E-MAJ-SAT}$ is $\compclass[\pp]{NP}$-complete, both of which are relevant in AI literature~\cite{LGM98,CXD12, OCD16}, and~\cite{Tod94} shows that computing the middle-bit of any \compclass{GapP} function is $\compclass[\pp]{P}$-complete.

\begin{quote}
    \textbf{Result 2:} If \corrbqp is self-low with respect to quantum queries, i.e. $\compclass[\corrbqp]{CorrBQP}\subseteq \corrbqp$, then $\countinghierarchy=\compclass[\pp]{BPP}$ (\cref{lemma_ch_collapse_implications}).
\end{quote}

Assuming \countinghierarchy does not collapse, this result implies the existence of a class which is self-low with respect to classically-accessible queries, but not quantumly-accessible queries. The behavior of various quantum query models has been studied extensively in complexity theory. For example, prior work has examined both the distinction between classical and quantum oracles~\cite{AK07, Aar09, cautionary_note_on_quantum_oracles} and the effect of the access model, such as whether oracle queries may be made coherently in superposition~\cite{YZ24, LLPY23}. Our result supports the trend that relativizing proof techniques may behave differently depending on whether the oracle is classically or quantum accessible. Next, we study alterations of \bqp in the query complexity setting.

\begin{quote}
    \textbf{Result 3:} We define the measure \emph{rational tree degree} (\cref{definition:rational_tree_degree}) which lower-bounds the query complexity of \adpostbqp (\cref{theorem_trdeg_lowerbounds_adpostq}).
\end{quote}

We define a new variant of rational degree (\ratdeg), which we call \emph{rational tree degree} (\ratsumdeg), and show that it lower-bounds the query complexity of \adpostbqp. We find that in the zero error case, the measures are equal, while in the constant error case, we are able to separate them. Two reasons motivate the study of \ratsumdeg. Firstly, we want to expand on the prior work which connected rational degree and \postbqp~\cite{MdW14}. Secondly, due to the fact that $\compclass[\pp]{BPP}\subseteq \adpostbqp$ relativizes, \ratsumdeg may be used as a lower-bounding technique for \compclass[\pp]{BPP}, potentially leading to an oracle separation between \compclass[\pp]{BPP} and \pspace.

Furthermore, we explore what happens when the query access of a quantum class to an oracle is restricted to classical queries. Specifically, we place this restriction on \postbqp, obtaining the class \postbqpclassical.

\begin{quote}
    \textbf{Result 4:} There exists an oracle $O$ such that $\compclass[O]{BQP}\not\subseteq \compclass[O]{PostBQP}_{\textup{classical}}$ and $\compclass[O]{PP}\not\subseteq \compclass[O]{PostBQP}_{\textup{classical}}$ (\cref{theorem:separate_bqp_postbqp_classical}) where \postbqpclassical is \postbqp restricted to classical queries.
\end{quote}

Our motivation for this is due to the fact that the equality $\pp = \postbqp$ holds relative to \emph{every} quantum accessible classical oracle~\cite{pp_postbqp, adaptivityvspostselection}. One would wonder if the same would hold true for \emph{every} classically accessible oracles.\footnote{In fact it turns out that, if $\compclass[O]{PP}\subseteq \compclass[O]{PostBQP}_{\textup{classical}}$ for all oracles $O$, then \countinghierarchy collapses to \compclass[\pp]{P}.}  
This result shows that this is indeed false. We do this by extending the oracle separation between \bqp (with quantum queries) and \postbpp.\footnote{\postbpp, also known as $\compclass{BPP}_{\mathsf{Path}}$, was defined in~\cite{postbpp} as the variant of \compclass{BPP} with the ability to postselect on some event occurring.} As $\bqp\subseteq \pp$ relativizes, our result follows.

Lastly, we study the query complexity of the adaptive version of \pdqp, denoted \adpdqp. Introduced in~\cite{space_above_bqp}, \pdqp augments \bqp with non-collapsing measurements. Informally, the \bqp algorithm may sample its current state without collapsing it, but all samples are processed at the end of the computation. Consequently, the sequence of unitaries in \pdqp may be fixed in advance. In contrast, the unitaries in \adpdqp may depend on the outcomes of earlier non-collapsing measurements, meaning this could strengthen the computational power of the model; see~\cref{fig:adpdqp_original}. Since our paper studies adaptive versions of \postbqp and \majbqp, \adpdqp is a natural model to examine as well. We show, however, that in this setting adaptivity does not yield a major advantage, resolving an open question from~\cite{space_above_bqp}.

\begin{quote}
   \textbf{Result 5:} We extend the adversary method to \adpdqp (\cref{thm:adversary_bound_adpdqp}). We find that for any \adpdqp query algorithm solving unstructured search requires $\Omega(n^{1/4})$ queries and parity requires $\Omega(n^{1/2})$ queries. Therefore, there exist oracles $O_1, O_2$ such that $\compclass[O_1]{NP}\not\subseteq \compclass[O_1]{AdPDQP}$ and $\compclass[O_2]{\oplus P}\not\subseteq \compclass[O_2]{AdPDQP}$.
\end{quote}

We find this result surprising as adding adaptivity to \postbqp allowed us to compute parity efficiently, while adding adaptivity to \pdqp does not aid with search nor parity. Our argument is an extension of the proof of the adversary bound for \pdqp~\cite{revisiting_bqp_with_noncollapsing_measurements}, with our main contribution being a simulation of each non-collapsing measurement which is both adaptive and does not entangle multiple simulations.

\subsection{Related Work}

The study of modifications of \bqp has a long history. To the best of our knowledge, the first investigation of metaphysical quantum computation was through the study closed timelike curves (CTCs)~\cite{ctc_deutsch}, whose computational power is equal to \pspace~\cite{ctc_pspace}, making it essentially the strongest known modification. For example, it has been shown that CTCs would allow us to clone arbitrary states~\cite{ctc_cloning}.

The power of cloning arbitrary states was described in~\cite{grover_search_no_signaling_principle}, where it was shown that it can lead to an exponential speed-up in solving any search problem. The complexity class associated with this power, \compclass{CBQP}, was formally studied in~\cite{rewindable_bqp}, where it was shown that its power is equal to \compclass{RwBQP}, and \adpostbqp.

The power of postselection was initialized in~\cite{pp_postbqp}, where it was shown that $\postbqp = \pp$. Furthermore, it has been shown that it may solve any boolean problem when combined with quantum polynomial advice, formally meaning that $\compclass{PostBQP/qpoly} = \compclass{ALL}$~\cite{limitations_of_quantum_advice}. Additionally, \postbqp was considered in the query setting as well, where it was shown that its query complexity is tightly described by the measure \ratdeg, the degree of a rational function~\cite{MdW14}.

Arguably, weakest (currently known) metaphysical ability which adds computational power is non-collapsing measurements, the ability to obtain a measurement sample without altering the state. It was shown that the associated complexity class \pdqp sits between \compclass{SZK} and \compclass[\pp]{BPP}~\cite{space_above_bqp}. However, its power increases significantly when combined with quantum advice, yielding $\compclass{PDQP/qpoly} = \compclass{ALL}$~\cite{pdqpqpoly}, or when combined with \compclass{QMA}, in which case $\compclass{PDQMA} = \compclass{NEXP}$~\cite{pdqma_nexp_1, pdqma_nexp_a}. Additionally, it is interesting to note that \pdqp and \adpostbqp gain additional power from the presence of intermediate measurements, without them, these classes collapse to \bqp and \postbqp, respectively. In contrast, classes such as \bqp or \corrbqp, do not exhibit such behavior.

Lastly, several works consider the space between \pp and \pspace. Toda's theorem shows that $\compclass{PH}\subseteq \compclass[\pp]{P}$~\cite{toda_theorem}. A similar result has been shown for a quantum version of \compclass{PH}, $\compclass{QCPH}\subseteq \compclass[\mathsf{PP}^{\mathsf{PP}}]{P}$~\cite{qph_original}. Alternatively, several papers discuss implications which would collapse the counting hierarchy~\cite{bqpqpoly_pp,bqpqpoly_pp_yirka}.
On the oracle side, few separations are known. In particular, there exist oracles $O$ such that $\compclass[O]{PP}\not \subseteq \compclass[\mathsf{NP}^{O}]{P}$~\cite{perceptrons_pp_ph, adaptivityvspostselection} and $\compclass[O]{\oplus P}\not \subseteq \compclass[\mathsf{PH}^O]{PP}$~\cite{parity_p_oracle_pp_ph}.

\section{Preliminaries}

In this section we list the definitions we use. We assume familiarity with standard quantum computing and computational complexity. For a deeper introduction, see~\cite{Nielsen_Chuang} and~\cite{arora_barak}.

\subsection{Notation}

We begin by explaining several points about the notation. The set $\{1,...n\}$ is denoted by $[n]$. Given a string $x\in\{0,1\}^n$, $x_i$ for $i\in [n]$ denotes the value of $x$ at index $i$. However, if the subscript is already used for a different purpose, we denote the value using $x(i)$. The complexity class $\mathcal{C}^\mathcal{O}$ means a complexity class $\mathcal{C}$ with oracle access to $\mathcal{O}$. Given a tower of oracles, such as $\mathcal{C}^{\mathcal{O}_1^{...^{\mathcal{O}_n}}}$, we assume that each level only has direct access to the oracle one level above, giving them indirect access to all oracles above. If $\mathcal{C}$ is a complexity class defined using a quantum TM, then we assume that we may make quantum queries to $\mathcal{O}$ (i.e. queries in superposition). If we want to restrict it to classical queries to $O$, meaning that every query input is a classical string, we label it as $\mathcal{C}^\mathcal{O}_{\textup{classical}}$.\footnote{This may be enforced in different ways. If the base machine may perform intermediate measurements, without loss of generality we may assume the query register is always measured before the query. If the base machine can not measure during the computation, we enforce that the state is in a basis state before a query.}

\subsection{Complexity Classes}

We provide the definitions of the complexity classes we use.

\begin{definition}[\pp and \pqp]\label{definition_computational_class_pp}
    A language $L$ is said to be in \pp if there exists a probabilistic Turing Machine $M$ such that for all inputs $x$,
    \begin{itemize}
        \item If $x\in L$, then $\Pr(M(x) = 1) > 1/2$.
        \item If $x \not \in L$, then $\Pr(M(x) = 1) \leq 1/2$.
    \end{itemize}
    Furthermore, the class \pqp is defined equivalently, except we use a quantum Turing Machine.
\end{definition}

Essentially, \pp is the probabilistic polynomial-time class where the gap between completeness and soundness may be arbitrarily small.

\begin{definition}[\postbqp,~\cite{pp_postbqp}]\label{definition_computational_class_postbqp}
    A language $L$ is said to be in \postbqp if there exists a uniform family of polynomial-sized quantum circuits $\{C_n\}_{n\geq 1}$ such that for all inputs $x$,
    \begin{itemize}
        \item After applying $C_n$ to $\ket{0...0}\otimes\ket{x}$, the first register has at least $1/\exp(n)$ probability of being measured as $\ket{1}$.
        \item If $x\in L$, then conditioned on obtaining $\ket{1}$ in the first register, the probability the second register is $\ket{1}$ is at least $2/3$.
        \item If $x\not \in L$, then conditioned on obtaining $\ket{1}$ in the first register, the probability the second register is $\ket{1}$ is at most $1/3$.
    \end{itemize}
    Without loss of generality, we assume that $C_n$ is composed of Hadamard and Toffoli gates.
\end{definition}

\postbqp is the class of languages which can be decided using postselection. Here, the first register is the one we are postselecting on, while the second register indicates containment of the input in the language. It suffices to postselect on one register as we can always entangle it with other registers. It has been shown that the two definitions above are equivalent and this result relativizes.

\begin{theorem}[\cite{pp_postbqp},~\cite{watrous_quantum_computational_complexity}]\label{theorem:pp_equals_postbqp}
    With respect to any classical oracle $O$,
    \begin{align*}
        \compclass[O]{PP} = \compclass[O]{PostBQP} = \compclass[O]{PQP}
    \end{align*}
\end{theorem}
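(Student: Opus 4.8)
I would prove the two inclusions $\compclass[O]{PostBQP}\subseteq\compclass[O]{PP}$ and $\compclass[O]{PP}\subseteq\compclass[O]{PostBQP}$ separately, following Aaronson's original argument, and I would check that every step survives the addition of a classical oracle $O$. Relativization is essentially free here: an oracle query to a classical $O$ is just a reversible permutation of the computational-basis states, so (i) in the circuit model it can be treated as one more Hadamard/Toffoli-style gate, and (ii) in the path-counting model it contributes a single deterministic, polynomial-time-checkable step to each computation path. So the only real content is the unrelativized equality.

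\textbf{The easy direction, $\compclass[O]{PostBQP}\subseteq\compclass[O]{PP}$.} Fix a $\compclass[O]{PostBQP}$ circuit family $\{C_n\}$ built from Hadamard, Toffoli, and oracle-query gates. Every amplitude $\bra{y}C_n\ket{0^m x}$ is a dyadic rational $k\,2^{-t}$ with $t=\mathrm{poly}(n)$, and $k$ is a signed sum over the $2^{\mathrm{poly}(n)}$ computation paths of $C_n$, the sign of each path being computable in deterministic polynomial time with access to $O$. Hence the unnormalized quantities $P_1 := 2^{2t}\Pr[\text{reg. }1=1]$ and $P_{11} := 2^{2t}\Pr[\text{reg. }1=1 \wedge \text{reg. }2=1]$ are each a difference of two functions that are $\sharpp$ relative to $O$, i.e.\ they lie in $\compclass[O]{GapP}$. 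By the completeness/soundness promise, $x\in L$ forces $3P_{11}\ge 2P_1$ and $x\notin L$ forces $3P_{11}\le P_1$; since $P_1>0$ always (the postselection event has probability $\ge 1/\exp(n)$), membership in $L$ is exactly the question of whether the $\compclass[O]{GapP}$ function $3P_{11}-2P_1$ is nonnegative, which $\compclass[O]{PP}$ decides. Thus $L\in\compclass[O]{PP}$.

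\textbf{The hard direction, $\compclass[O]{PP}\subseteq\compclass[O]{PostBQP}$.} Let $L\in\compclass[O]{PP}$ be witnessed by a polynomial-time oracle machine that, on input $x$, tosses $m=\mathrm{poly}(|x|)$ coins $z$ and outputs $a(x,z)\in\{0,1\}$, with $s_x := \#\{z: a(x,z)=1\}$ and $x\in L\iff s_x>2^{m-1}$. A standard padding trick lets us assume $s_x\neq 2^{m-1}$ for all $x$, so $x\in L\iff s_x\ge 2^{m-1}+1$. The $\compclass[O]{PostBQP}$ circuit first prepares $2^{-m/2}\sum_z\ket z$ with Hadamards and reversibly computes $2^{-m/2}\sum_z\ket z\ket{a(x,z)}$ using Toffoli and oracle gates; re-applying $H^{\otimes m}$ to the $z$-register and postselecting it on $\ket{0^m}$ (which succeeds with probability $\ge 1/2$, since $(2^m-s_x)^2+s_x^2\ge 2^{2m-1}$) leaves the single-qubit state proportional to $(2^m-s_x)\ket 0 + s_x\ket 1$. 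Applying a Hadamard turns this into a state proportional to $2^m\ket 0 + (2^m-2s_x)\ket 1$, whose $\ket 1$-coefficient is \emph{negative exactly when $x\in L$} (and nonzero, by the padding). Finally, applying Aaronson's rescaling gadget from~\cite{pp_postbqp} -- tensoring with an ancilla prepared (via Hadamards and an extra postselection) in a state proportional to $\ket 0 + 2^i\ket 1$ for an appropriate power of two and postselecting the two data qubits to be equal, so that the two amplitudes become comparable in magnitude without changing the sign of the $\ket 1$-amplitude -- and then measuring in the $\ket +,\ket -$ basis yields outcome $\ket +$ with probability bounded away from $1/2$ on the two sides of the promise, with overall postselection probability at least $1/\exp(\mathrm{poly}(|x|))$. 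Every gate used is Hadamard, Toffoli, or an oracle query, so this is a legal $\compclass[O]{PostBQP}$ computation.

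\textbf{Main obstacle.} The delicate part is the rescaling gadget in the second inclusion: one must design a sequence of postselections that (i) faithfully records the comparison $s_x$ versus $2^{m-1}$ as the \emph{sign} of a single amplitude (this is why we first form the state $2^m\ket 0 + (2^m-2s_x)\ket 1$, turning ``compare two positive numbers'' into ``determine one sign'', which is robust to rescaling), (ii) amplifies the exponentially small bias $\lvert s_x/2^m - 1/2\rvert$ into a constant gap, by choosing the power-of-two rescaling so the amplitudes end up within a bounded ratio of each other, and (iii) keeps the cumulative postselection probability singly (not doubly) exponentially small, which the padding guarantees. Once this gadget is in place both directions are routine, and since the only role of $O$ throughout is as a reversible permutation gate -- equally harmless in the circuit model and in the path-counting model -- every step goes through verbatim relative to an arbitrary classical oracle.
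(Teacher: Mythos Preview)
The paper does not supply its own proof of this theorem; it is quoted without argument as a known result of Aaronson~\cite{pp_postbqp} and used as a black box thereafter, so there is nothing in the paper to compare your proposal against. Your outline is a faithful sketch of Aaronson's original argument, and your identification of the rescaling/sign-detection gadget as the delicate step is correct.

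One point deserves to be made explicit if you turn this into a full proof. In the hard direction you write ``for an appropriate power of two $i$,'' but the machine does not know $s_x$ and hence cannot select $i$ in advance. Aaronson's resolution is to run the gadget for \emph{all} $O(m)$ values of $i$ in parallel and exploit the asymmetry between the two cases: when $x\notin L$ both amplitudes are nonnegative for every $i$, so $\Pr[\ket{+}]\ge 1/2$ uniformly; when $x\in L$ there is at least one $i^\star$ for which the two amplitudes are within a constant factor and of opposite sign, forcing $\Pr[\ket{-}]$ above $1/2$ by an absolute constant. After per-$i$ amplification one aggregates by an OR over $i$. Your ``Main obstacle'' paragraph gestures at this (``choosing the power-of-two rescaling so the amplitudes end up within a bounded ratio'') but does not actually say how the choice is made; since you flag it as the crux, the final version should spell this out.
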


Next, we define the counting hierarchy.

\begin{definition}[Counting Hierarchy]

    We define the complexity class \countinghierarchylevel{k}, known as the kth level of the counting hierarchy, as:
    \begin{itemize}
        \item $\countinghierarchylevel{0} = \compclass{P}$
        \item $\countinghierarchylevel{1} = \pp$
        \item $\forall k > 1, \countinghierarchylevel{k} = \compclass[\countinghierarchylevel{k-1}]{PP}$
    \end{itemize}
    Their union over constant $k$ is the counting hierarchy, $\countinghierarchy = \cup_{k\in\mathbb{N}} \countinghierarchylevel{k}$.
\end{definition}

Lastly, we define an adaptive variation of \postbqp known as \adpostbqp (due to~\cite{rewindable_bqp}). Before we do so, let us define a quantum circuit with postselection and intermediate measurements.

\begin{definition}[Quantum Circuit with Postselection and Intermediate Measurements]\label{definition_circuit_postselect_measurements}
    A \emph{quantum circuit with postselection and intermediate measurements} is a quantum circuit $C_n$ over a fixed universal gate set which is allowed to use,
    \begin{itemize}
        \item single-qubit measurements operators $M$, and
        \item single qubit postselection operators $\ketbra{1}{1}$.
    \end{itemize}
    For a fixed input $x$ and circuit $C_n$, let $\mathcal{A}$ denote the set of partial measurement outcomes of $C_n\ket{x}$ and let $p$ be the event that all postselection operators succeed. The probability $C_n$ outputs $o$ is described as,
    \begin{align*}
        \sum_{a\in\mathcal{A}} \Pr[a]\cdot \Pr[C_n \ket{x} = o \mid \text{$p$ and $a$}]
    \end{align*}
    where we assume that for all $a\in\mathcal{A}$ such that $\Pr[a]>0$, $\Pr[p\mid a]>0$.
\end{definition}

Essentially, the projector $\ketbra{1}{1}$ is the one performing postselection in the circuit. The additional power we obtain, in \adpostbqp, over standard \postbqp is that we may perform intermediate measurements during the computation. For example, based on a partial measurement result, we may decide whether to apply a $X$ gate before the projector, effectively changing which value we postselect on.

\begin{definition}[\adpostbqp \cite{rewindable_bqp}]\label{definition:adpostbqp}
    A language $L$ is said to be in \adpostbqp if there exists a family of \emph{quantum circuits with postselection and intermediate measurements} $\{C_n\}_n$, defined in~\cref{definition_circuit_postselect_measurements}, such that,
    \begin{itemize}
        \item If $x\in L$, $C_n$ outputs 1 with probability at least $2/3$.
        \item If $x\not \in L$, $C_n$ outputs 1 with probability at most $1/3$.
    \end{itemize}
\end{definition}

It has been shown that this type of computation is equivalent to other modifications of quantum computation. Namely, quantum computation with the ability to copy arbitrary states (known as $\compclass{CBQP}$~\cite{rewindable_bqp, grover_search_no_signaling_principle}) and the ability to rewind to a state after any measurement (known as $\compclass{RwBQP}$). A formal definition of \compclass{CBQP} and \compclass{RwBQP} is in~\cite{rewindable_bqp}, who show the lemma below.

\begin{theorem}[\cite{rewindable_bqp}]\label{theorem_adpostbqp_equals_rwbqp_equal_cqbp}
    $\compclass[\pp]{BPP}\subseteq \adpostbqp = \compclass{RwBQP} = \compclass{CBQP} \subseteq \pspace$.
\end{theorem}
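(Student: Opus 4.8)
The statement packages together four claims: two ``outer'' bounds, $\compclass[\pp]{BPP}\subseteq\adpostbqp$ and $\compclass{CBQP}\subseteq\pspace$, and two ``inner'' equivalences, $\adpostbqp=\compclass{RwBQP}$ and $\compclass{RwBQP}=\compclass{CBQP}$. The plan is to prove the two outer bounds by direct constructions, establish the inner equivalences by a cycle of explicit simulations (following~\cite{rewindable_bqp}), and then chain everything together. A convenient preliminary is that each of the three enhanced models is robust: one may insert polynomially many of its special operations (the $\ketbra{1}{1}$ projectors together with intermediate measurements for \adpostbqp; rewinds for \compclass{RwBQP}; clone gates for \compclass{CBQP}) at arbitrary points among ordinary gates, and the usual $2/3$–$1/3$ gap can be amplified by repetition and majority vote, so the simulations compose without error blow-up.

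For $\compclass[\pp]{BPP}\subseteq\adpostbqp$ I would first invoke $\pp=\postbqp$ (Theorem~\ref{theorem:pp_equals_postbqp}), reducing the task to placing $\compclass[\postbqp]{BPP}$ inside \adpostbqp. Given a $\bpp$ machine making polynomially many classical, possibly adaptive, queries to a \postbqp oracle, build one quantum circuit with postselection and intermediate measurements that (i) produces the machine's coin flips with Hadamard gates; (ii) for each query runs the \postbqp circuit for that query, applies the $\ketbra{1}{1}$ projector to its postselection register, and then performs an intermediate measurement of its answer register, using that bit to drive the rest of the simulation; and (iii) outputs the machine's final bit. Amplifying each oracle subroutine to error $2^{-\mathrm{poly}(n)}$ and the outer $\bpp$ computation to error below $1/6$, a union bound over the polynomially many queries keeps the overall error below $1/3$; meanwhile the total postselection success probability is (in expectation over the coin flips) a product of polynomially many per-query successes, each at least $1/\exp(n)$, hence still at least $1/\exp(n)$, which is permitted. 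The one point that needs care is that the conditional (post-postselection) correctness of a later query can depend on the outcomes of earlier postselections, but amplifying each query uniformly over all such conditionings handles this.

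For the inner equivalences I would prove the cycle $\compclass{CBQP}\subseteq\compclass{RwBQP}\subseteq\adpostbqp\subseteq\compclass{CBQP}$ by simulation, using the gadgets of~\cite{rewindable_bqp}. The first two inclusions are the ``expected'' directions: a clone gate can supply the redundancy a rewind needs to reconstruct a pre-measurement state, and a rewindable measurement together with a postselecting $\ketbra{1}{1}$ projector can be interchanged by standard deferred-measurement bookkeeping; I would cite~\cite{rewindable_bqp} for the precise circuits. The delicate inclusion is $\adpostbqp\subseteq\compclass{CBQP}$: here the point is that the exponential search speedup enabled by cloning~\cite{grover_search_no_signaling_principle} lets one locate, in polynomially many rounds, the branch on which an \adpostbqp circuit's $\ketbra{1}{1}$ projectors all succeed—an event of probability only $1/\exp(n)$—effectively performing an exponentially sped-up amplitude amplification and thereby removing the projectors in favor of ordinary (bounded-error) acceptance.

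Finally, $\compclass{CBQP}\subseteq\pspace$: a \compclass{CBQP} circuit has $\mathrm{poly}(n)$ gates and clone operations on $\mathrm{poly}(n)$ qubits, and its acceptance probability can be written as a sum over exponentially many computational paths in which each clone operation merely duplicates a branch index while each path contributes a product of polynomially many gate amplitudes; a \pspace machine enumerates the paths, accumulates the probability to sufficient precision, and compares it to $1/2$. I expect the main obstacle to lie not in either outer bound but in the $\adpostbqp\subseteq\compclass{CBQP}$ simulation: reconciling \adpostbqp's exponentially small postselection events with the bounded-error format of \compclass{CBQP} is exactly where cloning must be used in a genuinely nontrivial way, and it is the technical heart of~\cite{rewindable_bqp}; a secondary subtlety is the adaptive-postselection error bookkeeping flagged in the $\compclass[\pp]{BPP}\subseteq\adpostbqp$ step.
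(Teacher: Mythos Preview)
The paper does not prove this theorem; it is quoted verbatim from~\cite{rewindable_bqp} as a black box and used only as input to Theorem~\ref{theorem:corrbqp_equals_adpostbqp}. So there is no ``paper's own proof'' to compare against, and your proposal is in effect a reconstruction of the argument in~\cite{rewindable_bqp}.

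On its own merits, the proposal has two genuine gaps. First, the inner cycle is stated as $\compclass{CBQP}\subseteq\compclass{RwBQP}\subseteq\adpostbqp\subseteq\compclass{CBQP}$, but your one-line justifications argue the \emph{reverse} inclusions: ``a clone gate can supply the redundancy a rewind needs'' shows that cloning simulates rewinding, i.e.\ $\compclass{RwBQP}\subseteq\compclass{CBQP}$, not $\compclass{CBQP}\subseteq\compclass{RwBQP}$; and ``a rewindable measurement together with a $\ketbra{1}{1}$ projector can be interchanged'' is at best an argument that rewinding simulates postselection. The nontrivial directions in~\cite{rewindable_bqp} are $\compclass{CBQP}\subseteq\adpostbqp$ (or equivalently $\subseteq\compclass{RwBQP}$) and $\postbqp\subseteq\compclass{RwBQP}$, and neither is addressed by what you wrote; in particular, simulating a clone of an \emph{unknown} state using only rewinds or projectors is the substantive step, and it is not ``the expected direction''.

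Second, the $\compclass{CBQP}\subseteq\pspace$ sketch via ``a sum over exponentially many computational paths in which each clone operation merely duplicates a branch index'' does not work as stated, because the clone map $\ket{\psi}\mapsto\ket{\psi}\otimes\ket{\psi}$ is nonlinear: the amplitude of a basis state after a clone is a \emph{product} of two amplitudes of the pre-clone state, each of which is itself an exponential sum. The correct \pspace argument is a recursive amplitude computation (compute any single amplitude of any intermediate state on demand, using polynomial-depth recursion through the clone points), not a single flat Feynman sum. Your $\compclass[\pp]{BPP}\subseteq\adpostbqp$ paragraph, by contrast, is essentially correct and matches how the present paper uses the result.
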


\subsection{Complexity Measures}

An $n$-variate polynomial is a function $P:\{0,1\}^n\rightarrow \mathbb{R}$ which may be written as $P(x_1,..,x_n) = \sum_{d_1,..,d_n} c_{d_1\dots d_n} \prod_{i\in [n]} x_i^{d_i}$. Its degree is $\deg(P) = \max\{\sum_{i\in[n]}d_i|c_{d_1\dots d_n}\neq 0\}$. Without loss of generality, as the domain is $\{0,1\}^n$, we may assume $P$ is a multilinear polynomial. This means that for all $i$, $d_i\in \{0,1\}$. We say a polynomial $P$ $\epsilon$-approximates a function $f$ if for all $x$, $\abs{P(x) - f(x)}\leq~\epsilon$.

A rational function is a ratio $R = P/Q$ of polynomials $P,Q: \{0,1\}^n\rightarrow \mathbb{R}$ where $\forall x\in\{0,1\}^n$, $Q(x)\neq 0$. The degree of a rational function is $\deg(R) = \max\{\deg(P), \deg(Q)\}$. The rational degree of a function $f$, denoted $\text{rdeg}_{\epsilon}(f)$, is the smallest degree over all rational functions $R$ which $\epsilon$-approximate $f$.

\cite{MdW14} showed that the query complexity of a \postbqp algorithm computing a function $f: \{0,1\}^n\rightarrow \{0,1\}$ up to $\epsilon$ error, denoted $\postq{\epsilon}(f)$, is equivalent to the $\ratdeg[\epsilon](f)$. Formally, they show,
\begin{align*}
    \frac{1}{2}\ratdeg[\epsilon](f) \leq \postq{\epsilon}(f) \leq \ratdeg[\epsilon](f) 
\end{align*}

Specifically, to show that $\postq{\epsilon}(f) \leq \ratdeg[\epsilon](f)$, they describe an algorithm with postselection which evaluates a rational function $R$.

\subsection{Problems}

\subsubsection{Forrelation}

The Forrelation problem was introduced in~\cite{bqp_and_ph} and was shown to optimally separate \bqp from \compclass{BPP}~\cite{forrelation}. For our purposes, we only need two characteristics of the problem. Let us begin with the following definition.

\begin{definition}[$\epsilon$-almost $k$-wise Equivalency,~\cite{bpppath_bqp_oracle_separation_fix}]
    Consider two distributions $D_0$, $D_1$ over $\{0,1\}^n$. We say that $D_0$ and $D_1$ are $\epsilon$-almost $k$-wise equal if for every $k$-term $C$,
    \begin{align*}
        1-\epsilon \leq \frac{\Pr_{D_0}[C]}{\Pr_{D_1}[C]} \leq 1+\epsilon
    \end{align*}
    where by a $k$-term, given a string $z\in\{0,1\}^n$, we mean a product of $k$ terms $z_i$ or $1-z_i$ for $i\in[n]$.
\end{definition}

We will use the theorem below.

\begin{theorem}\label{theorem_forrelation_k_wise}[Forrelation properties, adapted from~\cite{bqp_and_ph, bpppath_bqp_oracle_separation_fix}]

There exist two distributions $\mathcal{F}$ and $\mathcal{U}$ on $\{0,1\}^n$ such that:
\begin{itemize}
    \item Suppose that with probability $1/2$, we sample $x$ from $\mathcal{F}$ and otherwise we sample it from $\mathcal{U}$. Then a \bqp machine with a single query in superposition to $x$ may decide which is the case.
    \item For any $k=n^{o(1)}$, $\mathcal{F}$ and $\mathcal{U}$ are $o(1)$-almost $k$-wise equivalent.
\end{itemize}
\end{theorem}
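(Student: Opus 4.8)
The plan is to instantiate the Gaussian Forrelation construction of~\cite{bqp_and_ph}, in the corrected form used by~\cite{bpppath_bqp_oracle_separation_fix}, with parameters tuned so that both properties hold, and then verify them. Choose $N=2^m$ so that the oracle string has length $n=\Theta(N\operatorname{polylog}N)$, i.e.\ $N=\widetilde\Theta(n)$. Fix a small constant $\varepsilon>0$ and a bounded-derivative ``inverse CDF'' encoding that writes $2N$ truncated reals into a string in such a way that a uniformly random string decodes to $2N$ i.i.d.\ copies of $\operatorname{trnc}_{[-1,1]}(\varepsilon\,\mathcal N(0,1))$; this is the distribution $\mathcal U$. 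To sample from $\mathcal F$: draw $G\sim\mathcal N(0,I_N)$, set $G'=H_NG$ with $H_N$ the normalized $N\times N$ Hadamard matrix, put $u=\operatorname{trnc}_{[-1,1]}(\varepsilon G)$ and $u'=\operatorname{trnc}_{[-1,1]}(\varepsilon G')$, and encode $(u,u')$.

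For the first property I would run the standard two-fold Forrelation estimator: prepare $\tfrac1{\sqrt N}\sum_z\ket z$, apply the diagonal unitary reading the $u_z$'s, apply $H_N$, apply the diagonal unitary reading the $u'_y$'s, apply $H_N$ again, and accept on outcome $\ket{0\cdots0}$. This makes a single quantum query to $x$, and its acceptance probability is a fixed increasing function of the (real-valued) forrelation $\tfrac1{N^{3/2}}\sum_{y,z}u_z(-1)^{y\cdot z}u'_y$; since $H_N^2=I$ this quantity concentrates around $\lVert u\rVert^2/N\approx\varepsilon^2=\Omega(1)$ under $\mathcal F$, whereas it is $O(1/\sqrt N)$ under $\mathcal U$. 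A single query therefore distinguishes $\mathcal F$ from $\mathcal U$ with constant advantage, which is the \bqp distinguishing property asserted in the statement.

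The substance is the second property. A $k$-term $C$ reads only $k$ of the $2N$ coordinate-blocks, so $\Pr_{\mathcal F}[C]$ and $\Pr_{\mathcal U}[C]$ depend only on the $k$-dimensional marginals of the two distributions, and it is enough to show these marginals are $(1\pm o(1))$-multiplicatively close. The structural point is that the joint law of any $k$ of the $2N$ pre-truncation coordinates is a centered Gaussian with covariance $I_k+E$: it is exactly $I$ among the $u$-coordinates, exactly $I$ among the $u'$-coordinates, and every $u$-vs-$u'$ cross-entry equals $\pm\varepsilon^2/\sqrt N$, so $\lVert E\rVert\le k/\sqrt N$. I would then telescope: writing $C=\ell_1\wedge\cdots\wedge\ell_k$ with literals on distinct blocks, $\Pr_{\mathcal F}[C]=\prod_{j=1}^k\Pr_{\mathcal F}[\ell_j\mid\ell_1,\dots,\ell_{j-1}]$, and each conditional is $\tfrac12\pm O(k/\sqrt N)$, because conditioning a near-product truncated Gaussian on at most $k$ of its rounded coordinates perturbs the marginal of any one remaining coordinate by only $O(k/\sqrt N)$ (each pairwise correlation is $O(1/\sqrt N)$, their effects add, and rounding to a bit does not amplify this by anti-concentration near the rounding threshold). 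Multiplying, $\Pr_{\mathcal F}[C]=2^{-k}\prod_j(1\pm O(k/\sqrt N))=\Pr_{\mathcal U}[C]\cdot\bigl(1\pm O(k^2/\sqrt N)\bigr)$, and since $N=\widetilde\Theta(n)$ and $k=n^{o(1)}$ we get $k^2/\sqrt N=o(1)$, establishing $o(1)$-almost $k$-wise equivalence.

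I expect the main obstacle to be making the per-coordinate conditioning estimate fully rigorous in the presence of truncation and of the discretization into bits: one must show, uniformly over which literals occur, that conditioning the truncated-and-rounded Gaussian vector on up to $k$ of its rounded coordinates moves the distribution of any one remaining rounded coordinate by only $O(k/\sqrt N)$ in total variation. This is precisely the estimate supplied by~\cite{bpppath_bqp_oracle_separation_fix}; one way to obtain it is to bound the movement of the conditional mean and conditional covariance of the underlying Gaussian (both shift by $O(k/\sqrt N)$ when conditioning on few weakly correlated coordinates) and then push the bound through the Lipschitz truncation map and the bounded-derivative encoding.
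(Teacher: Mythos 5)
The paper does not prove this theorem; it imports it wholesale from the cited works (it is stated as ``adapted from'' those references), so there is no in-paper argument to compare against. Your proposal is a faithful reconstruction of the actual proof strategy in those references: the truncated-Gaussian Forrelation distribution versus i.i.d.\ truncated Gaussians under an inverse-CDF bit encoding, the phase-query/Hadamard estimator for the first bullet, and the covariance-perturbation plus telescoping-over-literals argument for the second bullet, with the correct final bound $O(k^2/\sqrt N)=o(1)$ for $k=n^{o(1)}$ and $N=\widetilde\Theta(n)$. You also correctly identify the genuinely delicate step --- controlling how conditioning on up to $k$ rounded, truncated coordinates moves the conditional law of one more rounded coordinate --- as the content supplied by the correction paper; deferring that estimate is reasonable given the theorem is explicitly a citation.

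Two points to tighten. First, the circuit you describe (diagonal phase from $u$, then $H_N$, then diagonal phase from $u'$, then $H_N$) makes \emph{two} oracle applications to the string $x$, since $u$ and $u'$ are both read from $x$; the single-query claim in the statement requires the standard control-qubit trick, in which one prepares $\tfrac1{\sqrt2}(\ket0+\ket1)$ and conditionally queries either the $u$-half or the $u'$-half in a single oracle call, yielding acceptance probability $\tfrac{1+\Phi_{u,u'}}{2}$. Second, your telescoping step quietly assumes the $k$ literals land on distinct coordinate blocks; a $k$-term may place several literals on bits of the same encoded real, and while this does not break the argument (conditioning on several bits of one coordinate just restricts it to a subinterval), the per-step estimate should be stated for that case as well. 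Neither issue changes the conclusion, but both would need to be addressed to make the sketch a complete proof rather than a correct outline of the cited one.
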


\subsubsection{Adaptive Construction}

\cite{adaptivityvspostselection} introduced the adaptive construction of a boolean function.

\begin{definition}[Adaptive Construction]
    Given $f:\{0,1\}^n\rightarrow \{0,1\}$ and $d\in\mathbb{N}$, the adaptive construction of $f$ at depth $d$, $\adaf{f}{d}$, is defined as:
    \begin{align*}
        \adaf{f}{0} &= f \\
        \adaf{f}{d}(x,y,z) &= 
        \begin{cases}
            \adaf{f}{d-1}(y) &\text{ if }f(x)=0\\
            \adaf{f}{d-1}(z) &\text{ if }f(x)=1
        \end{cases}
    \end{align*}
    Given a family of partial functions $\mathcal{F}$, $\textup{Ada}_\mathcal{F} = \{\adaf{f}{d}: f\in\mathcal{F}, d\in\mathbb{N}\}$.
\end{definition}

The authors use $\adaf{f}{d}$ to show the following result.

\begin{theorem}[\cite{adaptivityvspostselection}]\label{theorem:pp_high_query_complexity}
    The query complexity of a \pp machine to compute $\adaf{\textup{AND}_n}{\log n}$ is $\Omega(\sqrt{n})$.
\end{theorem}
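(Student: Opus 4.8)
The plan is to pass from the \pp query model to threshold degree via the polynomial method, and then to lower bound the threshold degree of $\adaf{\textup{AND}_n}{\log n}$ using the method of dual polynomials. Recall that the acceptance probability of a depth-$q$ randomized decision tree is a real multilinear polynomial of degree at most $q$ in the input bits. Hence a $q$-query \pp decision tree deciding a Boolean function $g$ yields a degree-$\le q$ polynomial $p$ with $p(x)>1/2$ exactly when $g(x)=1$; after a harmless perturbation (to break ties) this becomes a sign representation of $g$, so $q=\Omega(\deg_{\pm}(g))$, where $\deg_{\pm}$ is the threshold degree. It therefore suffices to prove $\deg_{\pm}\big(\adaf{\textup{AND}_n}{\log n}\big)=\Omega(\sqrt{n})$.

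Second, I would show that the adaptive construction acts as an \emph{amplifier} turning bounded-error hardness into sign-representation hardness, via a lemma of the form $\deg_{\pm}\big(\adaf{f}{d}\big)=\Omega\big(\widetilde{\deg}_{1/3}(f)\big)$ whenever $d=\Omega(\log(\text{arity of }f))$. Combined with the classical Nisan--Szegedy bound $\widetilde{\deg}_{1/3}(\textup{AND}_n)=\Theta(\sqrt{n})$ and the choice $d=\log n$, this gives the theorem. To prove the lemma I would use LP duality for threshold degree: $\deg_{\pm}\big(\adaf{f}{d}\big)>q$ is certified by a nonzero $\Psi$ on the input cube that is orthogonal to every polynomial of degree $\le q$ and satisfies the pointwise sign condition $\Psi(x)\cdot(-1)^{\adaf{f}{d}(x)}\le 0$. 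I would build $\Psi$ by the dual block method: take a dual witness $\psi$ certifying $\widetilde{\deg}_{1/3}(\textup{AND}_n)>\sqrt{n}/C$ (orthogonal to low-degree polynomials, $\ell_1$-normalized, with $\Omega(1)$ correlation against $(-1)^{\textup{AND}_n}$), place one copy of $\psi$ at each of the $2^{d+1}-1$ nodes of the depth-$d$ tree, and combine them through the recursive if-then-else structure. Orthogonality to degree-$(\sqrt n/C)$ polynomials is preserved by the composition in the standard way; the sign condition is obtained by tracking, one multiplexer step at a time, how the correlation of the partial witness against the partial function evolves. Because each step (``if $\textup{AND}_n$ then recurse on the right subtree, else on the left'') requires getting both the gate \emph{and} the chosen subtree right, this recursion roughly squares the distance of the correlation from $1$, so $d=\log n$ levels drive it to within $2^{-\mathrm{poly}(n)}$ of $1$ --- close enough to force the pointwise sign alignment for the whole tree.

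The main obstacle is precisely this amplification analysis. Unlike a block composition $h\circ f$ with a \emph{fixed} outer function $h$, the ``outer'' object here is itself the recursively defined multiplexer tree, so one must unwind the recursion while maintaining two invariants at every level: (i) that the combined dual witness stays orthogonal to all polynomials of degree below $\sqrt n/C$, and (ii) that the level-by-level correlation recursion genuinely amplifies a constant gap to a doubly-exponentially small complementary gap within only $\log n$ levels \emph{and} produces the pointwise sign condition against $\adaf{\textup{AND}_n}{\log n}$. Fixing the constant $C$ and checking that the $2^{-\mathrm{poly}(n)}$ gap coming out of the amplifier defeats every degree-$(\sqrt n/C)$ sign representation is where essentially all of the effort goes; the reduction to threshold degree and the approximate-degree lower bound for $\textup{AND}_n$ feeding the amplifier are both off-the-shelf.
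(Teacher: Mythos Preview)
The paper does not prove this theorem; it is quoted verbatim from \cite{adaptivityvspostselection} as background and no argument is given in the present paper. Hence there is no ``paper's own proof'' to compare your attempt against. What I can do is assess whether your outline is a viable route to the cited result.

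Your reduction in the first paragraph is correct and standard: the acceptance probability of a $q$-query randomized decision tree is a degree-$q$ multilinear polynomial bounded in $[0,1]$, so a \pp computation of $g$ yields a polynomial whose sign (after shifting by $1/2$ and handling the boundary) agrees with $g$. Thus $\mathrm{PP}(g)\ge \deg_{\pm}(g)$, and it suffices to lower bound the threshold degree of $\adaf{\textup{AND}_n}{\log n}$.

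The amplification strategy you sketch is the right genre of argument and is essentially how the source paper proceeds (hardness amplification via the multiplexer/selector structure, fed by the $\Theta(\sqrt n)$ approximate-degree witness for $\textup{AND}_n$). Two points where your write-up is still only a sketch and would need real work: first, the recursive dual object for the if--then--else node is not the naive product $\psi(x)\,\Psi_{d-1}(y)\,\Psi_{d-1}(z)$, because the target sign $(-1)^{\adaf{f}{d}(x,y,z)}$ depends on only one of $y,z$ once $\textup{AND}_n(x)$ is fixed; one has to build the level-$d$ witness so that the sign information routes correctly through the selector while orthogonality to degree $<\sqrt n/C$ is preserved. Second, ``correlation $1-\epsilon^{2^d}$'' does not by itself give the pointwise sign condition required by the LP dual for threshold degree; you must either drive the wrong-sign mass literally to zero by the construction, or argue a clean-up step that removes it without destroying orthogonality. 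You flag both issues yourself, which is appropriate, but be aware that this is exactly where the proof in \cite{adaptivityvspostselection} does its real work, and your paragraph does not yet contain that work.
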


\section{Computational Models}\label{section:computational_models}

The models introduced in this paper have two motivations. Firstly, in~\cite{space_above_bqp}, the authors study the power of non-collapsing measurements (measurements which do not collapse the wavefunction) and its associated complexity class \pdqp. They show that $\pdqp \subseteq \compclass[\pp]{BPP}$, but it doesn't appear that $\pdqp \subseteq \pp$. \corrbqp was an attempt to bridge the gap between \pdqp and \pp by defining a metaphysical class which contains both. Secondly, \majbqp is motivated by addressing the computational power of always choosing the most-likely measurement outcome.

\subsection{Correlated Measurements}\label{subsection:correlated_measurements}

Before formally defining \corrbqp, let us define the correlated measurement gate.

\begin{definition}\label{definition_valid_correlated_outputs}
    Let $\rho$ be a state over the registers $R$. Suppose we partition the registers $R$ into $k$ equisized disjoint sets $R_1, ..., R_k$ where $\forall i, \abs{R_i}=j$. Additionally, for a constant $k$, let $M_{a,k} = (\ketbra{a}{a})^{\otimes k}$.
    
    The \emph{set of valid correlated outputs} $V$ is the set of strings $a\in\{0,1\}^j$ such that,
    \begin{align}
        \Tr (M_{a,k} \rho) > 0 \label{equation_valid_output}
    \end{align}
    Let $R_1$ be denoted as a \emph{leader} and let the state over registers $R_1$ be denoted by $\rho_1$. We call $\rho_1$ a \emph{leader state}. A correlated measurement $C_{\{R_i\}_{i\in[k]}, R_1}$ on $\rho$ is a measurement described by the collection of measurement operators $\{M_a^*\}_{a\in V}$ where,
    \begin{align}
        M_a^* = \frac{\Tr ( M_{a,1} \rho_1)}{\Tr (M_{a,k} \rho)} M_{a,k}\label{eq:correlated_measurement_operator_definition}
    \end{align}
    We call such operator a valid operator if $V\neq \emptyset$.
\end{definition}

Let us explain the definition of the correlated measurement operators $M_a^*$. The operator $M_{a,k}$ represents the condition that the values across all of ${R_i}_{i\in[k]}$ are equivalent. Therefore, presence of $M_{a,k}$ in the definition of $M_a^*$ in~\cref{eq:correlated_measurement_operator_definition} is represented by Step (1) of~\cref{fig:model_explanation}. Furthermore, the constant $\frac{\Tr ( M_{a,1} \rho_1)}{\Tr (M_{a,k} \rho)}$ in~\cref{eq:correlated_measurement_operator_definition} adjusts the weights in order to mimic the probability of measuring $a$ in the leader register, represented by Step (2) of~\cref{fig:model_explanation}. Lastly, we condition on the valid set of correlated outputs $V$ described by~\cref{equation_valid_output} as $\Tr (M_{a,k} \rho)=0$ would make $M_a^*$ undefined. Intuitively, it means that we may only collapse to an output which was already possible beforehand.

This means that applying the measurement described by $\{M_a^*\}_{a\in V}$ collapses all registers in $R$ to the same value $a\in V$ with probability,
\begin{align*}
    \frac{\bra{a}\rho_1\ket{a}}{\sum_{a^\prime \in V} \bra{a^\prime}\rho_1\ket{a^\prime}}
\end{align*}

The normalization term is necessary in case $V$ is a strict subset of $\{0,1\}^j$. Furthermore, note the assumption at the end of~\cref{definition_valid_correlated_outputs} that $V$ is not empty. If that was the case, we assume that the circuit is invalid.\footnote{The same condition holds for \postbqp. If for any input the postselected measurement outcome has zero probability, then the entire circuit is invalid.}
We emphasize that such operator does not satisfy the postulates of quantum mechanics. Specifically, the measurement operators do not satisfy the completeness equation,
\begin{align*}
    \sum_{a}  {M_a^*}^\dagger M_a^* \neq I
\end{align*}
However, as we are attempting to amplify the power of \bqp, such operation is necessary. Let us formally define the class \corrbqp.

\begin{definition}[\corrbqp]\label{definition_corrbqp}
    A language $L$ is said to be in \corrbqp if there exists a uniform family of quantum circuits $\{Q_n\}_{n\in \mathbb{Z}}$ composed of Hadamard, Toffoli and valid correlated measurement operators such that:
    \begin{itemize}
        \item If $x\in L$, the probability that $Q_n$ outputs $1$ to input $\ket{x}$ is at least $2/3$.
        \item If $x\not\in L$, the probability that $Q_n$ outputs $1$ to input $\ket{x}$ is at most $1/3$.
    \end{itemize}
\end{definition}

As we are using Hadamard and Toffoli gates, the condition for a valid correlated measurements in~\cref{equation_valid_output} becomes,
\begin{align*}
    \Tr ({M_{a,k}}^\dagger M_{a,k} \rho) \geq \frac{1}{2^h}
\end{align*}
where $h$ is the number of Hadamard gates used in $Q_n$.

\subsection{Quantum Majority}

Let us formally define the quantum majority gate, a single-qubit gate which collapses a state in the $\ket{0}, \ket{1}$ basis to the most-likely output.

\begin{definition}[Majority Gate]\label{definition:majority_gate}
    Let $\rho$ be a single-qubit state. Applying the \say{quantum majority} gate $M$ on $\rho$ results in the following:
    \begin{align*}
        M\rho = \begin{cases}
            \ket{0} &\text{if $\bra{0}\rho\ket{0} \geq 1/2$}\\
            \ket{1} &\text{otherwise}
        \end{cases}
    \end{align*} 
\end{definition}

Alternatively, we may view the gate above as postselecting on the more likely outcome between $\ket{0}$ and $\ket{1}$. Let us define \bqp enhanced with this gate.

\begin{definition}[\majbqp and \admajbqp]\label{definition:majbqp}
    A language $L$ is said to be in \majbqp if there exists a uniform family of quantum circuits $\{Q_n\}_{n\in \mathbb{Z}}$ composed of Hadamard, Toffoli and majority gates such that:
    \begin{itemize}
        \item If $x\in L$, the probability that $Q_n$ outputs $1$ to input $\ket{x}$ is at least $2/3$.
        \item If $x\not\in L$, the probability that $Q_n$ outputs $1$ to input $\ket{x}$ is at most $1/3$.
    \end{itemize}
    The class \admajbqp is defined similarly, except $Q_n$ also may perform \emph{intermediate measurements}.
\end{definition}

\section{Complexity Landscape}

\subsection{Basic Properties}

We establish some properties of \corrbqp and \majbqp which will be useful later.

\begin{lemma}\label{lemma_corrbqp_co}
    \corrbqp, \majbqp and \admajbqp are closed under complement.
\end{lemma}
\begin{proof}
    Let us describe the argument for \corrbqp; \majbqp and \admajbqp uses the same exact argument. For any language $L\in\corrbqp$, let $Q_n$ be the circuit per~\cref{definition_corrbqp} that recognizes $L$. Then for $\overline{L}$, the complement of $L$, we may decide whether $x\in \overline{L}$ by using the same circuit $Q_n$ and flipping the output register at the end. Therefore $\cocorrbqp\subseteq \corrbqp$. Showing the opposite direction uses the same argument.
\end{proof}

\begin{lemma}\label{lemma_corrbqp_error_amplification}
    $\corrbqpsc{2/3}{1/3} = \corrbqpsc{1-2^{-p(n)}}{2^{-p(n)}}$, $\majbqp (2/3, 1/3)\\ = \majbqp(1,0)$ and $\admajbqp(2/3,1/3) = \admajbqp(1-2^{-p(n)},2^{-p(n)})$.
\end{lemma}
\begin{proof}
    Let us begin with \corrbqp. In order to amplify the completeness and soundness gap, we perform $m$ computations in parallel and take a majority vote. By the Chernoff Bound~\cite{arora_barak}, this will decrease the error probability from $1/3$ to $2^{-e(m)}$ for some increasing function $e(m)$. What remains is setting $m$ such that $e(m)\geq p(n)$. A similar argument works for \admajbqp.

    For \majbqp, consider an arbitrary algorithm which outputs the correct answer with probability $2/3$. By using a majority gate at the end of the computation, we always output the correct answer.
\end{proof}

\begin{lemma}\label{lemma_corrbqp_selfcollapse}
    $\corrbqporacle{\corrbqp} = \corrbqp$, $\compclass[\majbqp]{MajBQP}_{\textup{classical}} = \majbqp$ and $\compclass[\admajbqp]{AdMajBQP}_{\textup{classical}} = \admajbqp$.
\end{lemma}
\begin{proof}
    Let us describe the argument for \corrbqp; \majbqp and \admajbqp uses the same exact argument. $\corrbqp\subseteq \corrbqporacle{\corrbqp}$ is straight-forward: The base machine does not make any queries to the oracle.
    Let us show the reverse. Let $Q_n$ be the \corrbqp circuit which includes polynomially many classical oracle calls to \corrbqp. Any classical call denotes that $Q_n$ has some classical query registers containing $x$ and a classical output register $o$. In the simulation using \corrbqp, we copy $x$ to a separate register, run a separate \corrbqp subroutine, measure the output at the end and copy the output to $o$. Using~\cref{lemma_corrbqp_error_amplification}, we can amplify the computation to obtain the wrong answer with exponentially low probability of error. Since the number of calls to the oracle is polynomial, the total accumulated error can still be bounded above any desired constant. Furthermore,~\cref{lemma_corrbqp_co} ensures that running the subroutine is sufficient for computing \emph{any} input. Finally, the resulting circuit $Q_n$ can also be amplified in order to ensure the error is at most $1/3$.
\end{proof}

Lastly, we show that correlated measurements may simulate a variety of other metaphysical operations.

\begin{lemma}\label{lemma:corr_simulate_postselection_cloning}
    Any quantum algorithm $A$ with correlated measurements may exactly simulate postselection and cloning states.
\end{lemma}
\begin{proof}
    \textbf{Postselection.} Without loss of generality, assume we are postselecting on register $r$ measuring $\ket{1}$. A correlated measurement may simulate this by initializing an ancilla to $\ket{1}$ and applying a correlated measurement on the ancilla and $r$.

    \textbf{Cloning.} Suppose we want to clone some state $\ket{\psi}$ which was created by applying a sequence of unitaries and partial measurements. In order to create $\ket{\psi}\otimes \ket{\psi}$, perform the same process on two registers simultaneously, replacing the partial measurements with correlated measurements. Due to the fact that the partial measurement probability is dictated by the leader register, we obtain exactly the same state as we would without the simulation.
\end{proof}

\subsection{Computing Parity}

It has been shown that when \bqp has the ability to clone states, it may solve search using a single query to the oracle and $O(\log n )$ cloning operations~\cite{grover_search_no_signaling_principle}.  We show that using correlated measurements, we may efficiently compute parity as well.

\begin{lemma}\label{lem:corrbqp_parity}
    The parity problem, defined as $x_1\oplus \dots \oplus x_n$ given some input $x\in\{0,1\}^n$, may be solved using $O(\log n)$ queries and $O(\log n)$ correlated measurements.
\end{lemma}
\begin{proof}
    The algorithm is as follows.
    \begin{enumerate}
        \item Prepare the state below
        \begin{align*}
            \ket{\psi_1} = \frac{1}{\sqrt{2n}} \sum_{i\in \{0,1\}^n} \ket{i}_O (\ket{0}_C + \ket{1}_C)
        \end{align*}
        \item Apply the oracle on register $O$, controlled on register $C=1$, obtaining 
        \begin{align*}
            \ket{\psi_2} = \frac{1}{\sqrt{2n}} \sum_{i\in \{0,1\}^n} \ket{i}_O (\ket{0}_C + (-1)^{x_i} \ket{1}_C)
        \end{align*}
        \item Repeat the following for $j$ from $0$ to $\log n$ (without loss of generality we assume $n$ is a power of $2$).
        \begin{enumerate}
            \item Let $\pi_j$ be a permutation on $[N]$ which maps $\pi_j(i) = \lfloor i/2^{j+1}\rfloor + (i + 2^{j}) \mod{2^{j+1}}$.
            \item Applying~\cref{lemma:corr_simulate_postselection_cloning}, simulate a copy operation on the current state of the algorithm $\ket{\psi^\prime}$, obtaining $\ket{\psi^\prime}\otimes \ket{\psi^\prime}$.
            \item Apply $\pi_j$ on the register $O$ of the first copy.
            \item Postselect on the instances where the registers $O$ are equal. Let $(-1)^{p_i}$ be the phase of $\ket{i}$ when $C=1$. After the postselection, this phase is $(-1)^{p_i \oplus \pi_j (p_i)}$.
            \item Discard one of the copies.
        \end{enumerate}
        \item After $\log n$ iterations, the register $C$ is in the state $\ket{0} + (-1)^{x_1 \oplus \dots \oplus x_n}\ket{1}$. By measuring in the Hadamard basis, we obtain the parity of the input.
    \end{enumerate}

    At each iteration over $j$, we perform one postselection and copy. The postselection step requires a single correlated measurement, while the copies only require simulating the previous correlated measurements. Therefore they may be incorporated in the correlated measurements simulating postselection, meaning we perform $O(\log n)$ correlated measurements. Additionally, in order to simulate copies, we need to make a query for each copy, costing us $O(\log n)$ oracle calls.
\end{proof}

\subsection{Characterizing metaphysical complexity classes}

We position the computational power of correlated measurements and quantum majority in the Complexity Zoo. Specifically, we show that $\corrbqp = \compclass[\pp]{BPP} = \admajbqp$, $\majbqp = \compclass[\pp]{P}$ and the Merlinizations of \corrbqp and \majbqp are equal to \compclass{NEXP}.

Let us begin with showing $\pdqp \subseteq \corrbqp$. Let us first describe the class \pdqp, a formal definition may be found in~\cite{space_above_bqp, revisiting_bqp_with_noncollapsing_measurements}. In summary, \pdqp is \bqp enhanced with the ability to obtain multiple samples of the computation states (so-called non-collapsing measurements) which are then processed using a deterministic Turing Machine.

\begin{lemma}\label{lemma_pdqp_in_corrbqp}
    $\pdqp\subseteq\corrbqp$
\end{lemma}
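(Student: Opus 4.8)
The plan is to simulate a \pdqp computation directly inside \corrbqp by running polynomially many \emph{synchronized} copies of the quantum circuit in parallel, using correlated measurements to keep the copies synchronized through the intermediate collapsing measurements and ``spending'' spare copies to harvest the non-collapsing samples. Recall that a \pdqp machine is a classical polynomial-time Turing machine $M$ that adaptively assembles a quantum circuit out of unitary gates, ordinary collapsing measurements, and non-collapsing measurement requests; a non-collapsing request returns a fresh sample from the current state's measurement distribution without disturbing it, and $M$ post-processes the returned samples into its output bit. Since $M$ runs in polynomial time, some $T=\mathrm{poly}(n)$ bounds the total number of non-collapsing requests, so our \corrbqp circuit will carry $k:=T+1$ registers $A_1,\dots,A_k$, each large enough to hold the \pdqp state and all initialized to the same input-dependent state.

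We then process $M$'s instructions one at a time, maintaining classically the list of gates applied so far, the collapsing outcomes obtained so far, and the set of copies still ``live'' (not yet spent). (i) A unitary is applied to every live copy. (ii) For a collapsing measurement of some register, we apply a \emph{single} correlated measurement to the corresponding register of all live copies, with one live copy designated as leader; since the live copies are identical, the leader's marginal is the true Born distribution, so the recorded outcome has the right probability, and the correlated measurement projects that register onto the outcome in every live copy and collapses the remainder of each accordingly, so the live copies stay mutually identical and each holds exactly the state \pdqp would hold here. (iii) For a non-collapsing request, we take an as-yet-unused copy $A_i$, apply a $k=1$ correlated measurement---which is just an ordinary measurement---to the relevant register of $A_i$, read off the sample, and mark $A_i$ spent; the live copies are untouched, so they keep evolving in step with \pdqp, while the harvested sample is a fresh measurement of the current \pdqp state, as required. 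All of $M$'s adaptive choices and its final output bit are classical functions of the harvested samples and recorded outcomes (all now classical registers), implemented by classically-controlled Hadamard/Toffoli subcircuits, and our circuit outputs whatever $M$ outputs.

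For correctness we couple our \corrbqp computation with a \pdqp computation: the collapsing outcomes are drawn from identical distributions in both and hence induce the same state trajectory; conditioned on that trajectory the non-collapsing samples are mutually independent with the correct marginals in both; and $M$'s decisions and output are the same deterministic function of this shared data, so the output bit has the same distribution and the $2/3$ versus $1/3$ promise is inherited, giving $L\in\corrbqp$. It then remains to check that every correlated measurement we invoke is \emph{valid}; the one nontrivial point is the threshold $\Tr ({M_{a,k'}}^\dagger M_{a,k'} \rho) \ge 2^{-h}$ in the collapsing-measurement simulations, where $k'\le k$ is the current number of live copies and $h$ the total number of Hadamards. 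Because the copies are unentangled across the partition, this trace equals $q^{k'}$ where $q$ is the single-copy Born probability of the outcome; since $q$ arises from a Hadamard/Toffoli subcircuit using some $h_0$ Hadamards we have $q\ge 2^{-h_0}$, and since the circuit applies those $h_0$ Hadamards to each of the (at least $k'$) live copies we get $h\ge k'h_0$, hence $q^{k'}\ge 2^{-k'h_0}\ge 2^{-h}$. The main obstacle is conceptual rather than computational: one must see that correlated measurements are precisely the device that lets the parallel copies share a single collapsing-measurement trajectory, so that ``spending'' a copy really does yield a faithful non-collapsing sample of the evolving \pdqp state---which is exactly what separates the \pdqp we must simulate (with intermediate measurements) from its weaker, \bqp-equivalent cousin without them.
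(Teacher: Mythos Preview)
Your proof is correct and follows essentially the same strategy as the paper: run polynomially many parallel copies of the \pdqp circuit, use correlated measurements to keep the surviving copies synchronized through each collapsing measurement, and sacrifice one copy per non-collapsing request to harvest the sample. The paper phrases this as ``run $i$ stops right before non-collapsing measurement $i$'' with the last run as leader, while you phrase it as maintaining a pool of live copies and spending one at each request; these are the same simulation. Your write-up is somewhat more thorough---you explicitly verify the $2^{-h}$ validity threshold via the product structure of the copies, and your framing also covers the adaptive variant the paper handles in a separate remark---but there is no substantive difference in approach.
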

\begin{proof}
Any \pdqp algorithm may be simulated by an algorithm which clones its state prior to each non-collapsing measurement. Therefore, by~\cref{lemma:corr_simulate_postselection_cloning}, as \corrbqp may simulate cloning, we are done.
\end{proof}

We may similarly show that $\compclass{AdPDQP}\subseteq \corrbqp$.\footnote{\compclass{AdPDQP} is adaptive version of \pdqp as it may adapt computation based on previous non-collapsing measurements.} Additionally, due to the fact that correlated measurements may simulate non-collapsing measurements, the proofs which show $\compclass{PDQMA} = \compclass{NEXP}$~\cite{pdqma_nexp_1, pdqma_nexp_a}\footnote{\compclass{PDQMA} is \compclass{QMA} with non-collapsing measurements.} and $\compclass{PDQP/qpoly} = \compclass{ALL}$~\cite{pdqpqpoly} may be used to show that $\compclass{CorrQMA}=\compclass{NEXP}$ and $\compclass{CorrBQP/qpoly} = \compclass{ALL}$. Next, let us characterize \corrbqp.

\begin{lemma}\label{lemma_bpp_pp_in_corrbqp}
    $\compclass[\pp]{BPP}\subseteq \corrbqp$
\end{lemma}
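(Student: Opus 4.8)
The plan is to obtain this inclusion by chaining together results already established about \corrbqp: namely $\pp \subseteq \corrbqp$ (Lemma~\ref{lemma_pp_in_corrbqp}), error amplification (Lemma~\ref{lemma_corrbqp_error_amplification}), and self-lowness under classical queries (Lemma~\ref{lemma_corrbqp_selfcollapse}). Let $L \in \compclass[\pp]{BPP}$, witnessed by a probabilistic polynomial-time machine $M$ with classical oracle access to a $\pp$-complete language $A$ (for instance, $\textsc{MajSat}$). Since $M$ is a classical machine, all of its oracle queries are classical strings.

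First I would observe that, as classes of languages, $\pp \subseteq \corrbqp$, so the same language $A$ that witnesses $L \in \compclass[\pp]{BPP}$ also lies in \corrbqp; hence $L \in \compclass[\corrbqp]{BPP}$, still with classical oracle queries. Next, a \bpp base machine is a special case of a \corrbqp base machine — a \corrbqp circuit is a \bqp circuit augmented with correlated measurement gates, and the random bits of $M$ can be generated by Hadamard gates followed by a measurement — and this simulation preserves the classical queries to the oracle. Therefore $L \in \corrbqporacle{\corrbqp}$, i.e.\ $L$ is decided by a \corrbqp machine making classical queries to a \corrbqp oracle. Finally, applying Lemma~\ref{lemma_corrbqp_selfcollapse}, which states $\corrbqporacle{\corrbqp} = \corrbqp$, gives $L \in \corrbqp$.

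The one point requiring care — and the place where a naive argument would fail — is that the $\pp$ oracle in the original $\compclass[\pp]{BPP}$ computation answers \emph{exactly}, whereas the \corrbqp simulation of $A$ answers only with bounded error, and $M$ makes polynomially many (possibly adaptive) queries. This is precisely the issue handled inside the proof of Lemma~\ref{lemma_corrbqp_selfcollapse}: each oracle subroutine is amplified via Lemma~\ref{lemma_corrbqp_error_amplification} to have error at most $2^{-q(n)}$ for a sufficiently large polynomial $q$, so that a union bound over the polynomially many queries keeps the probability that all answers are correct above any desired constant; conditioned on that event, $M$'s own bounded-error guarantee carries through, and a final amplification brings the total error below $1/3$. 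Thus the substantive work is already packaged in the cited lemmas, and what remains is the bookkeeping of assembling them; I do not expect any genuine obstacle beyond making this reduction precise.
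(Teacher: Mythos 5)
Your proposal is correct and follows essentially the same route as the paper: both arguments reduce the claim to the chain $\compclass[\pp]{BPP}\subseteq\corrbqporacle{\corrbqp}\subseteq\corrbqp$ using Lemma~\ref{lemma_pp_in_corrbqp} to replace the \pp oracle and Lemma~\ref{lemma_corrbqp_selfcollapse} for the final collapse, with the amplification concerns you raise already absorbed into the proof of the latter. The only cosmetic difference is the order in which you upgrade the oracle versus the base machine.
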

\begin{proof}
    By definition, we have that $\compclass[\pp]{BPP}\subseteq \corrbqporacle{\pp}$ as we are only adding more power to the machine being run and \compclass{BPP} may only perform classical queries to the \pp oracle.

    By combining~\cref{lemma:corr_simulate_postselection_cloning} with the fact that $\postbqp = \pp$~\cite{pp_postbqp},  we find that $\pp \subseteq \corrbqp$ and thus $\corrbqporacle{\pp}\subseteq\corrbqporacle{\corrbqp}$. Finally, as $\corrbqporacle{\corrbqp} \subseteq \corrbqp$ by~\cref{lemma_corrbqp_selfcollapse}, we obtain the desired result.
\end{proof}

\begin{lemma}\label{lemma_corrbqp_in_bpp_pp}
    $\corrbqp\subseteq \compclass[\pp]{BPP}$
\end{lemma}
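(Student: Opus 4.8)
The plan is to simulate a \corrbqp circuit inside $\compclass[\pp]{BPP}$ directly, in the spirit of Aaronson's $\postbqp\subseteq\pp$ argument, using the $\pp$ oracle to evaluate amplitudes exactly. Fix an input $x$ and write $Q_n = U_{m+1}\,C_m\,U_m\cdots C_1\,U_1$, where each $U_j$ is a block of Hadamard and Toffoli gates and each $C_i$ is a valid correlated measurement over a partition $R_1^{(i)},\dots,R_{k_i}^{(i)}$ with leader $R_1^{(i)}$. The crucial point is that once a candidate transcript $\tau=(a_1,\dots,a_m)$ of correlated outcomes is fixed, the un-normalized vector
\begin{equation*}
    \ket{\phi_i(\tau)} := U_i\,\Pi_{a_{i-1}}\,U_{i-1}\cdots \Pi_{a_1}\,U_1\,\ket{x,0\cdots 0}, \qquad \Pi_{a_j} := (\ketbra{a_j}{a_j})^{\otimes k_j},
\end{equation*}
has amplitudes of the form $\braket{y}{\phi_i(\tau)} = 2^{-h/2}\,g(y,\tau)$ for a $\mathsf{GapP}$ function $g$, where $h$ is the number of Hadamards in $Q_n$: each computational path through the circuit that respects the projectors $\Pi_{a_j}$ contributes $\pm 2^{-h/2}$, and respecting them is poly-time checkable. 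Consequently, for any fixed $\tau$, a $\pp$ oracle computes exactly (as integers scaled by $2^{-h}$): the squared norm $\braket{\phi_i(\tau)}{\phi_i(\tau)}$; the leader marginal weight $\bra{\phi_i(\tau)}(\ketbra{a}{a})_{R_1^{(i)}}\ket{\phi_i(\tau)}$ of any string $a$; the final acceptance weight $\bra{\phi_{m+1}(\tau)}\Pi_{\mathrm{out}=1}\ket{\phi_{m+1}(\tau)}$; and membership ``$a\in V_i$'', since $a\in V_i$ iff $\sum_{\mathrm{rest}}\bigl(g((a,\dots,a,\mathrm{rest}),\tau)\bigr)^2>0$, whose left side is again a $\mathsf{GapP}$ function and $\pp$ is closed under ``$\ne 0$'' tests (being closed under union).

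The algorithm then samples the transcript one correlated measurement at a time. Having committed to $(a_1,\dots,a_{i-1})$, it draws $a_i$ from the correlated law: $\Pr[a_i=a]$ is proportional to the leader marginal weight of $a$ in $\ket{\phi_i(\tau)}$ times $[a\in V_i]$, i.e.\ it equals $p_a(\rho_1^{(i)})/N_i$ with $N_i=\sum_{a'\in V_i}p_{a'}(\rho_1^{(i)})$; validity of the circuit guarantees $V_i\ne\emptyset$, so the law is well defined and every sampled $a_i$ lands in $V_i$. After a full transcript is obtained, one more $\pp$ query gives $\Pr[\mathrm{out}=1\mid\tau]$, and the algorithm outputs $1$ with that probability. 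Standard majority amplification, together with $\pp$-lowness of $\bpp$ (Theorem~\ref{theorem_pp_low_for_bpp}) to absorb the randomization, keeps the whole procedure in $\compclass[\pp]{BPP}$ with error below $1/3$.

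The main obstacle is drawing $a_i$ from the $V_i$-restricted, $N_i$-renormalized leader distribution. Bit-by-bit binary search on $a_i$ needs prefix sums $\sum_{a\le a^*}\bigl(\text{leader weight of }a\bigr)\,[a\in V_i]$, a $\#\mathsf{P}$-style sum with a $\pp$-decidable predicate inside each term --- which the naive counting bound places at the second level of the counting hierarchy, not inside $\compclass[\pp]{BPP}$. Plain rejection sampling (draw $a$ from the unrestricted leader marginal, accept iff $a\in V_i$) is correct but has acceptance probability $N_i$, which a \corrbqp circuit can drive to $2^{-\mathrm{poly}(n)}$ --- for instance by forcing the leader onto the unique jointly-achievable value of a subcomputation, which is precisely a source of the class's power. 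So the technical heart of the proof must be an efficient way to compute or circumvent $N_i$.

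I expect this to be the step to grind on. I would attack it by trying to show that each $N_i$ --- equivalently $\bra{\phi_i(\tau)}\Pi_{V_i}\ket{\phi_i(\tau)}$ --- is computable with polynomially many adaptive $\pp$ queries, exploiting the concrete $\mathsf{GapP}$ structure of $\ket{\phi_i(\tau)}$: with Hadamard and Toffoli gates every nonzero joint weight is at least $2^{-h}$, so ``$a\in V_i$'' for a superposition of $a$'s ought to be certifiable by a single projection onto a suitably prepared ancilla rather than by an unbounded search, which one might hope to fold into a single $\pp$ computation. I would be wary of the tempting alternative of normalizing \corrbqp circuits so that every correlated measurement has $V_i=\{0,1\}^{j_i}$: ``restricting the leader register to $V_i$'' amounts to coherently deciding a $\pp$-type predicate on a superposed register, which is morally a superposition query, and by Result~2 that cannot be free unless the counting hierarchy collapses to $\compclass[\pp]{BPP}$. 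Once per-measurement sampling is in hand, the remainder is bookkeeping over the polynomially many correlated measurements followed by the final $\pp$ evaluation.
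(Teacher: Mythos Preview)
Your proposal stops short of a proof: you correctly set up the $\mathsf{GapP}$/amplitude framework and isolate the real difficulty---sampling $a_i$ from the leader distribution restricted to $V_i$ and renormalized by $N_i$---but then explicitly leave it as ``the step to grind on.'' The paper's proof dispatches exactly this step, and by a mechanism lighter than either of the approaches you consider and reject.

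The paper never computes $N_i$ and never samples the string $a_i$ in one shot. Instead it samples $a_i$ \emph{bit by bit}, extending the argument used for $\pdqp\subseteq\compclass[\pp]{BPP}$. For bit $j$ it asks the \sharpp oracle for $P_{j,t}=\Pr\bigl[R_t(j)=1 \mid \text{all previously fixed bits across all }k\text{ partitions}\bigr]$ for every $t\in[k]$---that is $k$ plain \sharpp queries per bit, each of the Adleman--DeMarrais--Huang type you already know how to handle. If some $P_{j,t}\in\{0,1\}$ the bit is forced to that value (a valid correlated outcome must be simultaneously achievable in every partition, so a single deterministic partition pins the bit for all of them, and validity of the circuit rules out two partitions forcing opposite values); if every $P_{j,t}\in(0,1)$ the paper simply draws the bit with bias $P_{j,1}$, the leader's conditional marginal. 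No step ever sums a \sharpp quantity over a \pp-defined index set, and no step requires the renormalization constant $N_i$, so nothing climbs above $\compclass[\sharpp]{BPP}=\compclass[\pp]{BPP}$. You dismissed bit-by-bit sampling because the version you wrote down carried a global $[a\in V_i]$ indicator inside the prefix sum; the paper's version replaces that global validity test with $k$ \emph{local} per-partition marginal checks at each bit, and that replacement is precisely the idea your proposal is missing.
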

\begin{proof}
    We extend the argument from~\cite{space_above_bqp} which showed $\pdqp\subseteq \compclass[\pp]{BPP}$ for \corrbqp. Let us note that $\compclass[\pp]{BPP} = \compclass[\sharpp]{BPP}$ as repeated use of the \pp oracle may calculate the result of a problem in \sharpp.

    Let $Q_n$ be the quantum circuit run. For simplicity, suppose we only perform a single correlated measurement $C_{\{R_i\}_{i\in [k]}, R_1}$ where $\forall i, \abs{R_i} = l$. As all other operations are Hadamard and Toffoli gates, the probability of each measurement output is $\frac{2^c}{2^h}$ where we used $h$ Hadamard gates in total and $c\geq 0$. This holds even for probabilities which are conditioned on another measurement output. By the techniques used to show $\bqp\subseteq \pp$~\cite{adleman_demarrais_huang_quantum_computability}, we may ask the \sharpp oracle to compute these probabilities.

    Let $m_i$ denote the correlated output of the registers in $R_i$, $m=m_1 m_2\dots m_k$ and let $R_1$ be the registers over the leader state. We shall iteratively obtain $m_i(j)$ for $j\in[l]$, starting with $j=1$. First, we ask the \sharpp oracle for the probabilities $(P_{1,1}, ..., P_{1,k})$ where
    \begin{align*}
        P_{1,i} = \Pr[R_i(1) = 1]
    \end{align*}
    Suppose that there exists some $j\in[k]$ such that $P_{1,j}\in \{0,1\}$. If that is the case, we know that all valid correlated measurements must begin with the string $P_{1,j}$, meaning the \bpp machine may set $m_i(1) = P_{1,j}$ for all $i$. Note that it is not possible to have $j,n\in [k]$ such that $P_{1,j} = 1 - P_{1,n}$ as that would imply that there are no valid correlated measurement outputs, contradicting our assumption about the algorithm. On the other hand, suppose that $\forall i\in[k]$, $P_{1,i}\in (0,1)$. Then the \bpp machine flips a coin with bias $P_{1,1}$. Assuming it gets $a\in \{0,1\}$, $\forall i\in[k]$ it sets $m_i(1) = a$.

    Obtaining the remainder of $m$ continues analogously, except the \sharpp queries at step $j\in[l]$ for $j>1$ ask for the conditional probability of measuring $1$ conditioned on measuring $m_i(n)$ for $n\in[j-1]$. Formally, we ask for $(P_{j,1}, ..., P_(j,k))$ where
    \begin{align*}
        P_{j,i} = \Pr[R_i(j) = 1\;|\;R_1(1)R_1(2)\dots R_k(j-1) = m_1(1)m_1(2)\dots m_k(j-1)]
    \end{align*}
    
    The rest of processing $m_i(j)$ proceeds as described above. This algorithm may be extended to multiple correlated measurements by asking the \sharpp oracle for the probabilities conditioned on prior correlated measurement outputs.
\end{proof}

We proceed by characterizing \majbqp.

\begin{lemma}
    $\majbqp = \compclass[\pp]{P}$ and $\admajbqp = \compclass[\pp]{BPP} = \compclass[\pp]{BQP}_{\textup{classical}}$
\end{lemma}
\begin{proof}
    It has been shown that $\pp=\compclass{PQP}$~\cite{watrous_quantum_computational_complexity}, where \compclass{PQP} is the \bqp with unbounded error (as opposed to the standard polynomial-gap between acceptance and rejection probabilities in \bqp, there is no gap in \compclass{PQP}). Therefore, $\pp \subseteq \majbqp \subseteq \admajbqp$ as \majbqp may run the same circuit and apply a quantum majority gate to the output register at the end of the computation. 
    
    Let us show that $\compclass[\pp]{P} \subseteq \majbqp$ and $\compclass[\pp]{BQP}_{\textup{classical}} \subseteq \admajbqp$, both of which hold due to~\cref{lemma_corrbqp_selfcollapse}. In the first case, we have that $\compclass{P}\subseteq \majbqp$ relativizes, meaning that $\compclass[\pp]{P} \subseteq \compclass[\pp]{MajBQP} \subseteq \compclass[\majbqp]{MajBQP} = \majbqp$. In the second case, $\compclass{BQP}_{\textup{classical}}\subseteq \admajbqp_{\textup{classical}}$ relativizes as both may perform intermediate measurements. Therefore, $\compclass[\pp]{BQP}_{\textup{classical}} \subseteq\\ \compclass[\admajbqp]{AdMajBQP}_{\textup{classical}} = \admajbqp$.\footnote{The same argument does not show that $\compclass{BQP}_{textup{classical}}\subseteq \majbqp_{\textup{classical}}$ as \majbqp cannot perform intermediate measurements.}
    
    Next, we describe a simulation of \majbqp computation using $\compclass[\sharpp]{P} = \compclass[\pp]{P}$. Let $C$ be an arbitrary \majbqp circuit. Recall, as explained in previous sections, that given a circuit using a universal gate set (excluding the majority gate), we may ask \sharpp to describe the probability of outputting $\ket{1}$. \compclass[\pp]{P} may simulate \majbqp as follows.
    \begin{enumerate}
        \item Let $k$ be the number of majority gates in $C$. For $i\in [k]$, repeat the following:
        \begin{enumerate}
            \item Send the circuit $C$ until the application of the $i$-th majority gate, the previous majority gate outputs and the register we are applying the majority gate to the \pp oracle, asking for the output of the $i$-th gate conditioned on the previous outputs.
        \end{enumerate}
        \item Ask \pp for the output of the circuit conditioned on the majority gate outputs and return the result.
    \end{enumerate}

    Lastly, $\admajbqp \subseteq \compclass[\pp]{BPP}$ holds due to the same algorithm as above, except that every intermediate measurement may be replaced by a \sharpp query asking for the probability of outputting $1$, followed by a biased coin-flip.
\end{proof}

Lastly, let us show that correlated measurements possess the same computational power as other previously-studied metaphysical classes such as cloning, rewindable computation and adaptive postselection.

\begin{lemma}\label{theorem:corrbqp_equals_adpostbqp}
    \compclass{CorrBQP} = \compclass{AdPostBQP} = \compclass{RwBQP} = \compclass{CBQP}
\end{lemma}
\begin{proof}
    By~\cref{lemma_corrbqp_in_bpp_pp} and~\cref{theorem_adpostbqp_equals_rwbqp_equal_cqbp}, $\corrbqp \subseteq \compclass[\pp]{BPP} \subseteq \compclass{CBQP}$. On the other hand, by~\cref{lemma:corr_simulate_postselection_cloning}, correlated measurements may simulate cloning states, meaning that $\compclass{CBQP}\subseteq \corrbqp$. Thus, $\corrbqp = \compclass{CBQP}$, meaning that~\cref{theorem_adpostbqp_equals_rwbqp_equal_cqbp} implies the result.
\end{proof}

By combining all of the above, we obtain the following result. Note that as all of these equalities relativize as each containment followed using simulation.

\begin{theorem}\label{thm:equality_theorem}
    Relative to any classical oracle,
    $\corrbqp = \compclass[\pp]{BPP} = \admajbqp = \adpostbqp = \compclass{RwBQP} = \compclass{CBQP}$ and $\majbqp = \compclass[\pp]{P}$
\end{theorem}

By combining prior knowledge about \compclass{PP} and~\cref{thm:equality_theorem}, we obtain some implications.

Let $\corrbqp_{0} = \compclass{P}, \corrbqp_{1} = \corrbqp$ and $\corrbqp_{i} = \compclass[\corrbqp_{i-1}]{CorrBQP}$. Furthermore, define $\compclass[i]{PostBQP}_{\textup{classical}}$ analogously. 
\begin{corollary}\label{lemma_corrbqp_implications} The following statements hold:
    \begin{enumerate}
        \item\label{equation_pp_pp_equal_pp_corrbqp} $\compclass[\pp]{PP} = \compclass[\corrbqp]{PP}$
        \item\label{equation_ch_level_containment} For any $i\in \mathbb{N}$, $\countinghierarchylevel{i}\subseteq \corrbqp_{i} \subseteq \countinghierarchylevel{i+1}$.
        \item\label{equation_ch_equality} $\countinghierarchy = \cup_{i\in\mathbb{N}} \corrbqp_i$.
        \item \label{equation_postbqp_classical_hierarchy} For $i\in\mathbb{N}$, $\compclass[i]{PostBQP}_{\textup{classical}} \subseteq \corrbqp$.
        \item\label{equation_qcph_in_corrbqp_corrbqp} $\compclass{QCPH}\subseteq \compclass[\corrbqp]{CorrBQP}$
    \end{enumerate}
\end{corollary}
\begin{proof}
    Let us formally prove these results.~\cref{equation_pp_pp_equal_pp_corrbqp} holds as $\compclass[\pp]{PP} \subseteq \compclass[\corrbqp]{PP} \subseteq \compclass[\mathsf{P}^{\pp}]{PP} \subseteq  \compclass[\pp]{PP}$.

    \cref{equation_ch_level_containment} can be shown via induction. When $i=1$, we have that $\pp \subseteq \corrbqp = \compclass[\pp]{BPP} \subseteq \compclass[\pp]{PP}$. For the inductive argument, assume the statement holds for all integers less than $i$. Then $\countinghierarchylevel{i} = \compclass[\countinghierarchylevel{i-1}]{PP} \subseteq \compclass[\corrbqp_{i-1}]{PP} = \compclass[\corrbqp_{i-1}]{PostBQP} \subseteq \corrbqp_{i} = \compclass[\mathsf{PP}^{\corrbqp_{i-1}}]{P} = \compclass[\countinghierarchylevel{i}]{P} \subseteq \countinghierarchylevel{i+1}$.

    \cref{equation_ch_equality} comes directly from~\cref{equation_ch_level_containment}.

    \cref{equation_postbqp_classical_hierarchy} holds due to~\cref{lemma_corrbqp_selfcollapse} combined with~\cref{lemma:corr_simulate_postselection_cloning}.

    Lastly, it has been shown in~\cite{qph_original} that $\compclass{QCPH}\subseteq \compclass[\mathsf{PP}^{\compclass{PP}}]{P}$. This implies~\cref{equation_qcph_in_corrbqp_corrbqp} as $\compclass{QCPH}\subseteq \compclass[\mathsf{PP}^{\compclass{PP}}]{P} \subseteq \compclass[\mathsf{PP}^{\corrbqp}]{P} = \compclass[\corrbqp]{CorrBQP}$.
\end{proof}

Additionally, we obtain the following result about \countinghierarchy.

\begin{corollary}\label{lemma_ch_collapse_implications}
    If any of the following holds, then \countinghierarchy collapses to $\compclass[\pp]{BPP}$.
    \begin{enumerate}
        \item \label{case:ch_collapse_corrbqp_quantum}\corrbqp is self-low with respect to quantum queries (i.e. $\compclass[\corrbqp]{CorrBQP} = \corrbqp$).
        \item \label{case:ch_collapse_postbqp_classical}For any oracle $O$, $\compclass[O]{\pp}\subseteq\compclass[O]{CorrBQP}_{\textup{classical}}$.
    \end{enumerate}
    Furthermore, if \majbqp is self-low with respect to quantum queries (i.e. $\compclass[\majbqp]{\majbqp} = \majbqp$), $\countinghierarchy$ collapses to $\compclass[\pp]{P}$.
\end{corollary}
\begin{proof} 
    Consider~\cref{case:ch_collapse_corrbqp_quantum}, that $\compclass[\corrbqp]{CorrBQP} = \corrbqp$. This means that for all $i\geq 1$, $\corrbqp_{i} = \corrbqp$. Therefore, by~\cref{equation_ch_equality} from~\cref{lemma_corrbqp_implications}, $\countinghierarchy \subseteq \cup_{i\in\mathbb{N}} \corrbqp_i = \corrbqp = \compclass[\pp]{BPP}$. The \majbqp statement follows analogously.

    Assume~\cref{case:ch_collapse_postbqp_classical} holds. Therefore, for all $i\in\mathbb{N}$, $\countinghierarchylevel{i} \subseteq \compclass{CorrBQP}_{i, \textup{classical}} \subseteq \corrbqp$, so $\countinghierarchy \subseteq \compclass[\pp]{P}$.
\end{proof}

\subsection{Comments on the Self-low Property}\label{subsection_self_low}

Finally, one may ask themselves why \postbqp is not self-low with respect to classical queries. The main reasoning is due to the fact that \postbqp does not allow for intermediate measurements during its execution. As has been mentioned in prior work~\cite{pp_postbqp, rewindable_bqp}, proving the statement via simulation has the issue with correctly processing the error of the output. While amplification may bring the error of a subroutine simulating a query exponentially close to 0, we cannot guarantee it is 0. Therefore performing a query whose input depends on the result of a previous query must run the query in a superposition on both. However, the amplitude of the postselected output may be exponentially larger in the incorrect query than the correct query. This would mean that the incorrect computational path has a much larger amplitude than the correct one, making the simulation fail.

The next question one may ask is whether any of the classes such as \corrbqp or \adpostbqp are self-low with respect to queries in superposition. The standard attempt to perform simulations fails due to the inability of \adpostbqp to defer measurements until the end (if it could, $\adpostbqp=\postbqp$). However, if we are simulating a quantum query using a subroutine in superposition, the measurement output probabilities may significantly differ when conditioned on an input, but they must be the same across all of them. Secondly, the same issue which was described for \postbqp above remains for \adpostbqp, meaning that hoping to show $\compclass[\postbqp]{AdPostBQP} = \adpostbqp$ is tricky.

\section{Rational Trees and Postselection}

In this section we define the \emph{rational tree degree} measure and establish its relationship with the query complexity of \adpostbqp.

\subsection{Definitions}

\subsubsection{Rational Trees}

A rational tree $T_R$ is a binary tree where each node contains a rational function $R_i$. When evaluating such tree on an input $x\in \{0,1\}^n$, we flip a coin with bias $R_i(x)$. Therefore we assume that $R_i(x)\in [0,1]$ for all inputs. If we are at a leaf node, we output the output value of the coin, otherwise we go down the tree. The degree of the tree is the maximum degree of the product of nodes over one path in $T_R$.

We note that any such tree may be rewritten as a sum and product of rational functions. We may create such formula by processing each path separately by either adding a $R_i$ or $1-R_i$ term to its product at each node and taking a sum over all paths. We say that $T_R$ $\epsilon$-approximates a boolean function if for all $x$, $\abs{T_R(x) - f(x)}\leq \epsilon$. Let us formally define $\ratsumdeg$.

\begin{definition}\label{definition:rational_tree_degree}
    Consider a function $f:\{0,1\}^n \rightarrow \{0,1\}$. The rational tree degree up to $\epsilon$ error of $f$, denoted by $\ratsumdeg[\epsilon](f)$, is the minimum degree among all rational trees $T_R$ which $\epsilon$-approximate $f$.
\end{definition}

\subsubsection{Query Complexity of Adaptive Postselection}

An \adpostbqp algorithm is one whose postselection may depend on measurement results. Therefore, a quantum query algorithm with adaptive postselection is an algorithm which postselects on randomly-chosen outcome. Alternatively, one may view it as a tree where each node contains queries and postselection operators, and we split into two branches when a measurement is performed, each output weighted with their respective success probability. 
We denote the adaptive postselection query complexity of $f$ as $\adpostq{\epsilon}(f)$.

\subsection{Approximate Rational Tree}\label{section:ratsumdeg_adpostQ}

We show that the degree of a rational tree lower-bounds the query complexity of \adpostbqp.

\begin{theorem}\label{theorem_trdeg_lowerbounds_adpostq}
    For all $\epsilon\in [0,1/2)$ and $f:\{0,1\}^n\rightarrow \{0,1\}$,
    \begin{align*}
        \ratsumdeg[\epsilon](f) \leq 2\adpostq{\epsilon}(f)
    \end{align*}
\end{theorem}
\begin{proof}
    Let $T$ denote the number of queries made and $M$ the number of intermediate single-qubit measurements. We show that $\ratsumdeg[\epsilon](f) \leq 2\adpostq{\epsilon}(f)$ via induction on $M$. When $M=0$, the algorithm does not make any intermediate measurements, meaning it behaves as a \postbqp algorithm. Thus, per~\cite{MdW14}, the query complexity of the algorithm may be described using a rational function of degree at most $2T$. Specifically, one may define the polynomials $Q(x)=\Pr[p=1]$ and $P(x)=\Pr[p=o=1]$ which describe the query algorithm. Here $o$ is the output register and $p$ is the postselection register. Their ratio $P(x)/Q(x)$ equals $\Pr[o=1|p=1]$, the probability of outputting 1 when postselecting on 1. Per~\cite{quantum_lower_bounds_by_polynomials}, $P$ and $Q$ have degree at most $2T$, meaning that the degree of $P/Q$ is at most $2T$.
    
    Assume the statement holds for all integers less than $N>0$. Consider the first partial measurement is performed after $t<T$ queries on register $m$. Per the same argument as above, let $Q_1(x) = \Pr[p=1]$ and $P_{1,i} = \Pr[m=i \wedge p=1]$. Therefore $P_{1,i}/Q_1$ describes $\Pr[m=i| p=1]$ and has degree at most $2t$. At this point, the computation takes one of two paths based on $m$. Per the inductive argument, there exists a rational formula $R_i$ for each $i$ of degree at most $2(T-t)$. Thus the rational formula $R$ describing the query algorithm is $R = R_0(P_{1,0}/Q_1) + R_1(P_{1,1}/Q_1)$, may be described as a tree and has degree at most $2T$.    
\end{proof}

Unfortunately, we are not able to show $\adpostq{\epsilon}(f)\leq \ratsumdeg[\epsilon](f)$ for $\epsilon\neq 0$. The issue is that the algorithm in~\cite{MdW14} which was used to show $\postq{\epsilon}\leq \ratdeg[\epsilon]$ is biased towards outputting the correct answer. This causes an issue when evaluating multiple rational functions as it may add bias towards a path which outputs the incorrect output. We may extend prior results of~\cite{adaptivityvspostselection} to show a separation between $\ratdeg$ and $\ratsumdeg$.

\begin{lemma}
    Consider $f^\prime = \adaf{\textup{AND}_n}{\log n}$. Then $\ratdeg[1/3](f^\prime) = \Omega(\textup{poly}(n))$, while $\ratsumdeg[1/3](f^\prime)= O(\log ^3 n)$.
\end{lemma}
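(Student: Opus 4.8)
The separation between $\ratdeg$ and $\ratsumdeg$ on $f' = \adaf{\textup{AND}_n}{\log n}$ has two halves, proven by quite different means.

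For the lower bound $\ratdeg[1/3](f') = \Omega(\textup{poly}(n))$, I would lean directly on Theorem~\ref{theorem:pp_high_query_complexity}, which says the \pp query complexity of $\adaf{\textup{AND}_n}{\log n}$ is $\Omega(\sqrt{n})$. Since $\postbqp = \pp$ relative to every classical oracle, and since~\cite{MdW14} established $\postq{\epsilon}(f) \geq \tfrac12 \ratdeg[\epsilon](f)$ — equivalently $\ratdeg[\epsilon](f) \le 2\postq{\epsilon}(f)$ — I need the reverse direction: that a low-degree rational approximation yields a cheap \pp (or \postbqp) query algorithm, hence a rational function of degree $d$ gives $\postq{1/3}(f') = O(d)$, so $d = \Omega(\postq{1/3}(f')) = \Omega(\sqrt n)$. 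Actually the cleanest route is: a degree-$d$ rational function $\epsilon$-approximating $f'$ gives (via the \postbqp algorithm of~\cite{MdW14} that evaluates a rational function) a \postbqp query algorithm with $O(d)$ queries, which is a \pp query algorithm with $O(d)$ queries; combined with the $\Omega(\sqrt n)$ lower bound of Theorem~\ref{theorem:pp_high_query_complexity} this forces $d = \Omega(\sqrt n)$, which is certainly $\Omega(\textup{poly}(n))$. I should double-check whether the $1/3$-error version of the \pp lower bound is what Theorem~\ref{theorem:pp_high_query_complexity} gives or whether a standard error-reduction / boosting step is needed; that is the one technical wrinkle on this side.

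For the upper bound $\ratsumdeg[1/3](f') = O(\log^3 n)$, I would exhibit an explicit rational tree. Recall $\adaf{\textup{AND}_n}{\log n}$ is built by $\log n$ levels of adaptive composition, where at each level we evaluate an $\textup{AND}_n$ on a fresh block, and based on the (boolean) outcome recurse into one of two sub-instances. The natural rational tree mirrors this recursion exactly: at the root, put a rational function $R$ that $\tfrac{1}{10\log n}$-approximates $\textup{AND}_n$ (the rational degree of approximate $\textup{AND}_n$ is $O(\log(\log n))$ — or even a constant-degree rational function suffices for approximate OR/AND by the classical Newman-type constructions, but $O(\log\tfrac{1}{\epsilon})$-degree rational approximations of $\textup{AND}$ with error $\epsilon$ certainly exist); then branch on the coin flip into two subtrees, each being the rational tree for $\adaf{\textup{AND}_n}{\log n - 1}$ on the appropriate block. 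The depth of the tree is $\log n$, and along any root-to-leaf path we multiply $\log n$ rational functions each of degree $O(\log\log n)$, giving degree $O(\log n \cdot \log\log n)$ — even better than the claimed $O(\log^3 n)$, so there is slack. The error analysis is the routine part: at each of the $\log n$ levels the tree's coin may flip the "wrong" way with probability at most $\tfrac{1}{10\log n}$, so by a union bound the probability the tree ever deviates from the true computation path is at most $\tfrac{1}{10} < \tfrac13$; conditioned on following the correct path, the leaf outputs $f'(x)$ with the same small error. Keeping the per-level error polynomially small in $\log n$ costs only an $O(\log\log n)$ factor in each rational function's degree, so the whole thing stays polylogarithmic.

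The main obstacle is the lower bound direction: I need to make sure the reduction from "low rational degree" to "cheap \pp query algorithm" is tight enough and that the error regimes match — Theorem~\ref{theorem:pp_high_query_complexity} is stated for exact(ish) \pp computation of $\adaf{\textup{AND}_n}{\log n}$, whereas $\ratdeg[1/3]$ is a bounded-error notion, so I must either invoke a version of the lower bound robust to $1/3$ error or amplify. Fortunately $\adaf{\textup{AND}_n}{\log n}$ is exactly the hard instance that~\cite{adaptivityvspostselection} designed to survive the passage through postselection/adaptivity, so the needed robustness should be available from their analysis; citing it carefully is the crux. The upper bound, by contrast, is a direct construction whose only content is a union bound over $\log n$ levels, and I expect it to go through without surprises — indeed with room to spare relative to the stated $O(\log^3 n)$ bound.
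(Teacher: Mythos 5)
Your proposal is correct, and the two halves relate to the paper's proof differently. The lower bound is essentially the paper's argument: Theorem~\ref{theorem:pp_high_query_complexity} plus $\pp=\postbqp$ gives $\postq{1/3}(f^\prime)=\Omega(\sqrt n)$, and the \cite{MdW14} inequality $\postq{\epsilon}(f)\leq\ratdeg[\epsilon](f)$ transfers this to $\ratdeg[1/3]$. The ``error-regime'' wrinkle you flag dissolves immediately: \pp acceptance is unbounded-error by definition, so a $1/3$-error \postbqp query algorithm is a fortiori a \pp query algorithm and the $\Omega(\sqrt n)$ bound applies as is; no amplification is needed. For the upper bound you diverge from the paper: the paper builds an \adpostbqp query algorithm ($\log n$ sequential \postbqp subroutines computing $\textup{AND}_n$ in $O(\log^2 n)$ queries each, with a measurement after each level, so $\adpostq{1/3}(f^\prime)=O(\log^3 n)$) and then invokes Theorem~\ref{theorem_trdeg_lowerbounds_adpostq} to convert that into a rational tree, whereas you construct the rational tree directly, placing a low-degree rational approximation of $\textup{AND}_n$ at each of the $\log n$ levels and union-bounding the per-level error. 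Your route is more elementary, avoids the detour through query algorithms entirely, and in fact gives a sharper bound: using the degree-$1$ rational approximation $1/(1+N(n-\sum_i x_i))$ of $\textup{AND}_n$ with $N=\Theta(\log n)$, each node has constant degree and the product along any root-to-leaf path has degree $O(\log n)$, beating the stated $O(\log^3 n)$. What the paper's route buys instead is the explicit exhibition of an efficient \adpostbqp query algorithm, which is the point it wants to make in context (the $\postq{}$ versus $\adpostq{}$ separation noted right after the lemma); your construction certifies the $\ratsumdeg$ bound but not, by itself, the \adpostbqp upper bound.
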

\begin{proof}
    By~\cref{theorem:pp_high_query_complexity} and~\cref{theorem:pp_equals_postbqp}, we know that $\postq{1/3}(f^\prime) = \textup{poly}(n)$. On the other hand, the algorithm in~\cite{pp_postbqp} which shows that \postbqp can compute majority may also be used to compute $\textup{AND}$ in the same time, namely $O(\log^2 n)$. Therefore \adpostbqp may run the \postbqp subroutine which computes each level of $\adaf{\textup{ADA}}{d}$. By measuring at the end of each subroutine, we ensure that the we obtain the correct output with high probability. As $\postq{1/3}(f^\prime)\leq \ratdeg[1/3](f^\prime)$ and $\ratsumdeg[1/3](f^\prime) \leq 2\adpostq{1/3}(f^\prime)$, we are done.
\end{proof}

We note that the original result of~\cite{adaptivityvspostselection} showed an oracle separation between \pp and \compclass[\compclass{NP}]{P}, implying an oracle separation between \postbqp and \adpostbqp.

\subsection{Exact Computation}

The tight relationship between $\ratdeg[\epsilon]$ and $\postq{\epsilon}$ established in~\cite{MdW14} suggests that the rational degree conjecture\footnote{The rational degree conjecture is asking whether in the zero-error setting, $\ratdeg(f)$ is polynomially related to $\widetilde{\text{deg}}(f)$ for all total boolean functions $f$~\cite{rational_degree_conjecture}. For the recent proof of the conjecture, see~\cite{KWZ26}.} may be approached by studying the complexity class \posteqp, \postbqp which outputs the correct answer with certainty. We show that adding adaptivity to \posteqp does not increase its power.

\begin{lemma}\label{lemma_posteqp_equals_adposteqp}
    For any classical oracle $O$,
    \begin{align*}
        \compclass[O]{PostEQP} =\compclass[O]{AdPostEQP}
    \end{align*}
\end{lemma}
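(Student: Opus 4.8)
The inclusion $\compclass[O]{PostEQP}\subseteq\compclass[O]{AdPostEQP}$ is immediate, since a \posteqp circuit is a quantum circuit with postselection and no intermediate measurements that is already correct with certainty. The content is the reverse inclusion: in the zero-error regime, intermediate measurements and adaptive postselection can be removed.

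The plan is to defer every intermediate measurement and every postselection to the very end of the computation --- the same move one would attempt for \adpostbqp, where it fails --- and to argue that in the zero-error case that failure does not arise. Let $C$ be a quantum circuit with postselection and intermediate measurements (Definition~\ref{definition_circuit_postselect_measurements}), possibly containing oracle gates for $O$, that decides a language $L$ with certainty. After straightforward padding I may assume its $M$ intermediate measurements and its (possibly trivial) postselection projectors occur at a fixed, input-independent list of time steps and always act on one designated qubit, with all adaptivity carried by unitary SWAP and controlled gates whose choice is produced by the circuit's deterministic, polynomial-time classical control. I then build a \posteqp circuit $C'$: (i) replace each intermediate measurement by a CNOT copying the measured qubit into a fresh ancilla $a_i$, never used again; (ii) realize the classical control coherently, as a polynomial-size reversible sub-circuit --- from the same gate set, including oracle queries --- that reads $a_1,\dots,a_i$ and applies the chosen operation, uncomputing its garbage; (iii) replace each of the $k$ postselections ``project the designated qubit onto $\ket{1}$'' by a CNOT of that qubit into a fresh register $P_j$, leaving the qubit itself untouched for the later gates; (iv) at the end, compute $P_1\wedge\cdots\wedge P_k$ into one qubit with Toffoli gates, postselect that qubit onto $\ket{1}$, and read out the designated output register. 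Since $C'$ has no intermediate measurements and a single postselection, it is a legal \posteqp circuit with oracle $O$.

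For validity, $\Pr[P_1=\dots=P_k=1]$ equals the total amplitude mass surviving the deferred postselections; each postselection of $C$ succeeds with probability at least $2^{-h}$ where $h$ is the number of Hadamard gates (as noted after Definition~\ref{definition_corrbqp}), so its product along any one measurement path is $1/\exp(n)$, and, the measurement branchings carrying total mass one, the whole quantity is at least $1/\exp(n)$. For correctness, a routine deferred-measurement / deferred-postselection analysis shows that after conditioning on $P_1=\dots=P_k=1$ the state of $C'$ is a superposition $\sum_\pi \alpha_\pi\,\ket{\pi}_{a_1\cdots a_M}\otimes\ket{\phi_\pi}$ over measurement-outcome sequences $\pi$, where $\ket{\phi_\pi}$ is (up to the global normalization, which absorbs the per-branch scalars) exactly the state $C$ reaches along $\pi$ once its postselections have been applied step by step, and $\alpha_\pi\ne 0$ exactly for the $\pi$ that $C$ actually realizes. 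Here the zero-error hypothesis enters: since $C$ outputs $L(x)$ with probability $1$, along every realized $\pi$ the designated output qubit of $\ket{\phi_\pi}$ is already in the basis state $\ket{L(x)}$. All surviving branches therefore agree on that qubit, the conditioned state factors as $\ket{L(x)}\otimes(\cdots)$, and the final measurement of $C'$ returns $L(x)$ with certainty. Oracle gates are handled like any other gate, so the construction relativizes to any classical $O$.

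The delicate point --- and the reason the argument is confined to zero error --- is step (iii)--(iv): deferring a postselection past a later intermediate measurement is \emph{not} faithful branch by branch, because postselecting $P_j$ only at the very end reweights every earlier measurement outcome by the success probabilities of the postselections that follow it, whereas in the step-by-step semantics of \adpostbqp an intermediate measurement collapses the state and is not reweighted by future postselections. In the bounded-error setting that reweighting can turn a $2/3$-accepting circuit into one accepting with probability below $1/3$, which is precisely why $\adpostbqp$ is not known to be contained in $\postbqp$; but when every realized branch already produces the correct bit deterministically, reweighting the branches is harmless and the obstruction vanishes. I expect the fussiest part to write out carefully to be the ``routine analysis'' above --- checking that conditioning on the $P_j$'s reproduces $C$'s step-by-step evolution branch by branch and that the contributing $\pi$ are exactly the realized ones --- but this is bookkeeping on top of the standard principle of deferred measurement and involves no new idea.
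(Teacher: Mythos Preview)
Your proposal is correct and takes essentially the same approach as the paper: both hinge on the observation that in the zero-error regime every realized measurement branch already outputs the correct bit deterministically, so deferring (equivalently, omitting) the intermediate measurements cannot change the answer. Your write-up is considerably more explicit --- you additionally defer the postselections to a single terminal one and address the validity bound and the branch-reweighting subtlety that the paper's two-line argument glosses over --- but the underlying idea is identical.
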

\begin{proof}
    By definition, $\posteqp \subseteq \adposteqp$. Consider some language $L\in \adposteqp$, some input $x$ and an \adposteqp circuit $C$ as in~\cref{definition_circuit_postselect_measurements}. We know that if $x\in L$, $C$ always outputs 1 and always outputs 0 otherwise. Consider the first measurement $C$ performs. We know that regardless of the measurement output, we must always obtain the correct output. Therefore this measurement may be omitted as it does not alter the output probability. As the only difference between \posteqp and \adposteqp is the ability to perform partial measurements during computation, $\adposteqp \subseteq \posteqp$. Finally, this proof relativizes as any \posteqp circuit may simulate a \adposteqp circuit and vice-versa.
\end{proof}

We find this interesting as the same does not hold in the bounded-error case. Furthermore, when we consider the \emph{exact} version of other metaphysical complexity classes we study in this paper, we find that their relationships differ from their bounded-error counterparts.

\begin{observation}
    Given a bounded-error complexity class, let their exact version by denoted by replacing the symbol $\textup{B}$ which signifies bounded-error, with $\textup{E}$, signifying zero-error.\footnote{For complexity classes which do not contain the symbol $\textup{B}$, such as \pdqp, we add the $\text{E}$ in front of $\textup{Q}$.} We find that,
    \begin{align*}
        \eqp = \rweqp \subseteq \pdeqp \subseteq \ceqp \subseteq \adposteqp = \posteqp = \correqp \subseteq \majeqp
    \end{align*}
\end{observation}
\begin{proof}
    Firstly, \eqp is trivially contained by all classes above. Furthermore, $\rweqp \subseteq \eqp$. This is due to the fact that in exact quantum computation, regardless of a measurement outcome, we output the correct answer with certainty. Therefore, omitting the measurement also provides us with the correct output with certainty, making the rewinding operator superfluous.

    Secondly, per~\cref{lemma_posteqp_equals_adposteqp} and the fact that \correqp may simulate postselection, as shown in~\cref{lemma:corr_simulate_postselection_cloning}, $\posteqp=\adposteqp \subseteq \correqp$. On the other hand, $\correqp \subseteq \adposteqp$ due to the fact that regardless of the correlated measurement output, we must always output the correct one. Therefore, the amplitude adjustment based on the leader register does not affect the output, meaning we may instead simply postselect on the instances where all correlated registers are equal.

    Thirdly, $\eqp \subseteq \pdeqp \subseteq \ceqp \subseteq \correqp$ as copies may simulate non-collapsing measurements and \correqp may simulate copies. Lastly, $\majeqp = \majbqp$ by adding a quantum majority gate at the output register right before termination, we are done. 
\end{proof}

We find this interesting as a direct relationship between $\compclass{SZK}$ and \pp is not known,\footnote{There exist oracle separations in both directions, see~\cite{BCHTN16}.} the fact that $\compclass{SZK}\subseteq \pdqp$ and $\pdeqp \subseteq \pp$ shows another example of how allowing for bounded-error may amplify the power of a complexity class.
 
Lastly, when we consider the zero-error versions of \ratdeg and \ratsumdeg, we find that they are equivalent.

\begin{lemma} For any function $f:\{0,1\}^n\rightarrow \{0,1\}$,
    \begin{align*}
        \ratsumdeg[0](f) \leq \ratdeg[0](f) \leq 2\ratsumdeg[0](f)
    \end{align*}
\end{lemma}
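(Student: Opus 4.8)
The plan is to treat the two inequalities separately, since they have very different characters. The left inequality $\ratsumdeg[0](f)\le\ratdeg[0](f)$ is essentially free: a rational function $R$ is already a rational tree, namely the one consisting of a single leaf node labelled $R$. If $R$ computes $f$ exactly, then this tree flips a coin of bias $R(x)=f(x)\in\{0,1\}$ and outputs it, hence also computes $f$ exactly, and its degree is exactly $\deg(R)$. Taking $R$ optimal gives the claim. For the right inequality I would not try to clear denominators directly: a rational tree of depth $h$ can have $\Theta(2^{h})$ nodes, so putting all of the node polynomials over a common denominator would blow up the degree exponentially in the depth rather than keeping it within a constant factor of the tree degree.

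Instead, for $\ratdeg[0](f)\le 2\ratsumdeg[0](f)$ I would pass through the query complexity of postselected algorithms and establish the chain
\[
    \ratdeg[0](f)\;\le\;2\,\postq{0}(f)\;=\;2\,\adpostq{0}(f)\;\le\;2\,\ratsumdeg[0](f).
\]
The first step is the lower bound of~\cite{MdW14}, $\tfrac12\ratdeg[\epsilon](f)\le\postq{\epsilon}(f)$, specialized to $\epsilon=0$. The middle equality is the query-complexity analogue of Lemma~\ref{lemma_posteqp_equals_adposteqp}: one direction ($\adpostq{0}(f)\le\postq{0}(f)$) is immediate because \postbqp is the special case of \adpostbqp with no intermediate measurements, and the other ($\postq{0}(f)\le\adpostq{0}(f)$) follows by the same argument used there — in the zero-error regime the final output bit is correct regardless of the outcome of any intermediate measurement, so those measurements can be deferred to the end (or discarded) without changing the output, and this transformation does not introduce new queries. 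The last step is the converse of Theorem~\ref{theorem_trdeg_lowerbounds_adpostq} at $\epsilon=0$: given a rational tree $T_R$ of degree $d=\ratsumdeg[0](f)$ that computes $f$ exactly, build an \adpostbqp query algorithm that walks the tree from the root, at each node $i$ running the \postbqp evaluation routine of~\cite{MdW14} (which evaluates a rational function $R_i$ of degree $\delta_i$ with at most $\delta_i$ queries and outputs $1$ with probability exactly $R_i(x)$), measuring the resulting bit to decide which child to descend into, and outputting that bit upon reaching a leaf. The algorithm outputs $1$ with probability exactly $T_R(x)=f(x)$, so it has zero error, and along any run it makes at most $d$ queries (up to the degree-bookkeeping noted below); hence $\adpostq{0}(f)\le d$. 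Note that the ``bias'' obstruction described in Section~\ref{section:ratsumdeg_adpostQ} for $\epsilon\neq0$ does not arise, precisely because every evaluated probability here is exactly $0$ or $1$.

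I expect the main obstacle to be the final step — converting a rational tree into an adaptive-postselection query algorithm while keeping the query count controlled by the \emph{degree} of the tree rather than by its number of nodes. This forces the algorithm to be genuinely adaptive: it descends a single root-to-leaf path instead of evaluating the whole tree. One must also handle two technical points: that the node rational functions can be taken bounded in $[0,1]$ (as in the \cite{MdW14} construction, where each $R_i$ is presented as a ratio of two honest acceptance probabilities), so the evaluation routine realizes the intended coin bias; and that the query cost accumulated along a descended path is comparable to the tree's degree, reconciling the ``sum of node degrees along a path'' with ``degree of the product of the nodes along a path.'' All remaining steps are routine, and the factor $2$ in the statement is exactly the one inherited from the degree-versus-query relationship of~\cite{MdW14}.
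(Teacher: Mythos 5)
Your proposal is correct and takes essentially the same approach as the paper: the left inequality is by definition (a rational function is a one-node tree), and the right one combines $\ratdeg[0](f)\leq 2\postq{0}(f)$ from~\cite{MdW14} with the observation that a zero-error rational tree is deterministic at every node, so it can be evaluated node-by-node with the exact rational-function routine of~\cite{MdW14} without the bias obstruction that arises for $\epsilon\neq 0$. The only difference is cosmetic: you route through the equality $\postq{0}(f)=\adpostq{0}(f)$ and then bound $\adpostq{0}(f)$ by the tree degree, whereas the paper collapses these two steps into the single direct bound $\postq{0}(f)\leq\ratsumdeg[0](f)$.
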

\begin{proof}
    By definition, $\ratsumdeg[0](f) \leq \ratdeg[0](f)$. On the other hand, by~\cite{MdW14}, $\ratdeg[0](f) \leq 2\postq{0}(f)$. Let us show that $\postq{0}(f) \leq \ratsumdeg[0](f)$. Consider an arbitrary rational tree $T_R$ which exactly evaluates to $f$. Therefore, given an input $x$, any leaf which has non-zero probability of being evaluated must output $f(x)$. This means that without loss of generality, we may assume that any non-leaf node must also always output $1$ or $0$ for every $x$ as it suffices to describe one path to a leaf which outputs $f(x)$. This may be simulated in \postbqp by evaluating each node using the rational function evaluation algorithm, as described in~\cite{MdW14}, as it always outputs $f(x)$ when restricted to exact rational functions.
\end{proof}

Therefore, as $\ratdeg(f)$ is polynomially-related to deterministic query complexity $D(f)$ (and other complexity measures) for every total function~\cite{KWZ26}, so is $\ratsumdeg(f)$.

\section{Postselection with Classical Queries}\label{section:oracle_separations}

Finally, in order to address~\cref{lemma_ch_collapse_implications}, we study the class \postbqpclassical, \postbqp which may only make classical queries to the oracle. Of course, in the absence of an oracle, $\postbqpclassical = \pp$, but in the relativized world, we only know that for every classical oracle $O$, $\compclass[O]{PostBQP}_{\textup{classical}} \subseteq \compclass[O]{PP}$.

We extend the proof of~\cite{bqp_and_ph, bpppath_bqp_oracle_separation_fix} which separates \bqp from \postbpp to obtain a separation between \bqp (with \emph{quantum} query access) and \postbqpclassical. As $\bqp\subseteq \pp$ relativizes, there exists an oracle $O$ such that $\compclass[O]{PP}\not \subseteq \compclass[O]{PostBQP}_{\textup{classical}}$. Our result underscores the necessity to allow computational classes to posses full access to an oracle as even a slight restriction may thwart their computational power. This is interesting to see as \pp may only perform classical queries. We begin our proof with the proposition below.

\begin{lemma}\label{lemma:no_postbqpclassical_algorithm}
    Let $f: D\rightarrow \{0,1\}$ be a function where $D\subseteq \{0,1\}^n$. Suppose there are two distributions $\mathcal{D}_0, \mathcal{D}_1$ which are $\epsilon$-almost $k$-wise equivalent. If $\epsilon = o(1)$, then there does not exist a \postbqpclassical algorithm which may compute $f$ using $k$ queries.
\end{lemma}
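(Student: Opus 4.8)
The plan is to show that a \postbqpclassical algorithm making few classical queries cannot distinguish inputs drawn from $\mathcal{D}_0$ from inputs drawn from $\mathcal{D}_1$, which immediately rules out computing any $f$ that separates the two distributions. The key structural observation is that a \postbqpclassical machine, since it can only query classical strings, is really a randomized classical algorithm that adaptively picks query positions, reads the oracle bits there, and then (after postselection) outputs a bit; the quantum part only manipulates the classically-obtained data. So I want to reduce to a statement about ordinary postselected classical query algorithms and then invoke the $\epsilon$-almost $k$-wise equivalence.

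\medskip

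Concretely, here are the steps in order. First, I would argue that without loss of generality a \postbqpclassical algorithm with $k$ classical queries can be modeled as follows: it has an acceptance register and a postselection register, and the joint probability that it makes a particular set of query-answer observations and ends in a given $(\text{post}=1,\text{out}=b)$ configuration is a \emph{fixed} affine/multilinear combination of the oracle bits it read — in fact it depends on the input string $x$ only through the values $x_{i_1},\dots,x_{i_k}$ at the (at most $k$, possibly adaptively chosen) queried coordinates, i.e. through at most a $k$-junta's worth of information along each branch. More carefully, conditioning on the full transcript of query positions and answers, the quantities $\Pr[\text{post}=1]$ and $\Pr[\text{out}=1 \wedge \text{post}=1]$ are determined; summing over transcripts consistent with $x$, both $\Pr[\text{post}=1]$ and $\Pr[\text{out}=1\wedge\text{post}=1]$ can be written as sums, over the $2^k$ (or fewer) possible answer-patterns to the queried bits, of the indicator that $x$ is consistent with that pattern times a fixed nonnegative weight. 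Each such indicator is exactly a $k$-term in the sense of the \emph{$\epsilon$-almost $k$-wise equivalency} definition. Second, I would take expectations over $x\sim\mathcal{D}_0$ versus $x\sim\mathcal{D}_1$: since every $k$-term has its $\mathcal{D}_0$-probability within a $(1\pm\epsilon)$ factor of its $\mathcal{D}_1$-probability, and the weights are nonnegative, $\E_{\mathcal{D}_0}[\Pr[\text{post}=1\wedge\text{out}=1]]$ and $\E_{\mathcal{D}_1}[\cdots]$ agree up to a $(1\pm\epsilon)$ multiplicative factor, and likewise for $\Pr[\text{post}=1]$. Third, the \postbqp acceptance probability on input $x$ is $\Pr[\text{out}=1\mid\text{post}=1] = \Pr[\text{out}=1\wedge\text{post}=1]/\Pr[\text{post}=1]$, and the correctness guarantee forces this ratio to be $\geq 2/3$ on $f^{-1}(1)$ and $\leq 1/3$ on $f^{-1}(0)$. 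If $f$ is $1$ with nonzero probability under each $\mathcal{D}_b$ and $0$ with nonzero probability under each, then averaging the ratio over the distributions and using the $(1\pm\epsilon)$ closeness of both numerator-type and denominator-type quantities with $\epsilon=o(1)$ yields a contradiction with the $2/3$ vs $1/3$ gap.

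\medskip

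The main obstacle I expect is handling \emph{adaptivity and postselection simultaneously}: a postselected algorithm can place enormous conditional weight on a low-probability answer-pattern, so I cannot just say "the two transcripts are nearly equally likely" — I have to keep the numerator and denominator separate throughout and only compare them at the very end via the ratio, exploiting that the $k$-wise guarantee is \emph{multiplicative}. A second, more technical point is bounding the effective number of queried coordinates: adaptivity means different branches query different coordinate sets, but along any single branch there are at most $k$ of them, so each transcript-consistency indicator is a genuine $k$-term, and the total is a nonnegative combination of $k$-terms — which is all the definition needs. A third subtlety is the exponentially small postselection probability allowed by the \postbqp definition (at least $1/\exp(n)$): I need the ratio bound to survive even when $\Pr[\text{post}=1]$ is tiny, but since the multiplicative closeness applies to $\Pr[\text{post}=1]$ itself (it too is a nonnegative combination of $k$-terms, up to the trivial $k$-term $1$), the ratio is preserved regardless of scale. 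Finally, I would remark that combining this lemma with the Forrelation properties (Theorem~\ref{theorem_forrelation_k_wise}) and the fact that $\bqp\subseteq\pp$ relativizes gives Result~4, i.e. an oracle $O$ with $\compclass[O]{BQP}\not\subseteq\compclass[O]{PostBQP}_{\textup{classical}}$ and $\compclass[O]{PP}\not\subseteq\compclass[O]{PostBQP}_{\textup{classical}}$.
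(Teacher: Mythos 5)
Your proposal is correct and follows essentially the same route as the paper: express the postselection and acceptance-and-postselection probabilities, averaged over $x\sim\mathcal{D}_b$, as nonnegative combinations of $k$-term probabilities, invoke the multiplicative $(1\pm\epsilon)$ closeness termwise, and compare the two ratios $a_{\mathcal{D}_1}/a_{\mathcal{D}_0}$ against the $2/3$ vs.\ $1/3$ gap at the very end. Your explicit treatment of adaptive query branches is a welcome elaboration the paper glosses over; the only slip is the phrase ``$f$ is $1$ with nonzero probability under each $\mathcal{D}_b$,'' which should instead be the standing assumption that $\mathcal{D}_b$ is supported on $f^{-1}(b)$, so that $a_{\mathcal{D}_b}$ is a convex combination of per-input acceptance ratios all on the correct side of the gap.
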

\begin{proof}
    Let $M$ be some \postbqpclassical machine which computes $f$, meaning that it is a machine which makes classical queries to the input $x\in D$. Let $\Pi_p$ and $\Pi_o$ be the postselection and output projectors, respectively (we assume they are valid projectors for \postbqp, meaning that $\Pi_o \leq \Pi_p$). Note that these projectors are independent of the input $x$. Let $(i_1,\dots,i_k)$ be the queried indices of $x$ and $Z = (Z_1,\dots,Z_k) = (x_{i_1},\dots,x_{i_k})$ the outputs to the queries. After these queries, the circuits state $\rho_Z$ solely depends on $Z$. Given a distribution $D$ and $s\in \{p,o\}$, we define $p_{D,s}$ to be the probability that when $x$ is drawn from $D$, that $\Pi_s$ succeeds on $\rho_z$.
    \begin{align*}
         p_{D,s} \coloneqq \E_{x\sim D} \left[ \Tr(\Pi_s \rho_{Z(x)} ) \right] = \sum_{z\in \{0,1\}^k} \Pr_{x\sim D}[Z(x) = z] \Tr(\Pi_s \rho_z)
    \end{align*}
    Therefore the probability of accepting is $a_D = \frac{p_{D,o}}{p_{D,p}}$. This means that for $M$ to compute $f$, then $a_{D_0} \leq 1/3$, while $a_{D_1} \geq 2/3$.

    As $\mathcal{D}_0$ and $\mathcal{D}_1$ are $\epsilon$-almost $k$-wise equivalent, $(1-\epsilon)p_{D_0,s} \leq p_{D_1,s} \leq (1+\epsilon) p_{D_0,s}$, meaning that $1-\epsilon \leq p_{D_1,s}/p_{D_0,s} \leq 1+\epsilon$. Therefore,
    \begin{align*}
        1-\epsilon \leq \frac{p_{D_1,o}/p_{D_1,p}}{p_{D_0,o}/p_{D_0,p}} = \frac{a_{D_1}}{a_{D_0}} \leq 1+\epsilon
    \end{align*}
    If $M$ succeeds, we know that $\frac{a_{D_1}}{a_{D_0}} \geq 2$, but when $\epsilon = o(1)$, this is false.
\end{proof}

This suffices for us to prove our main result.

\begin{theorem}\label{theorem:separate_bqp_postbqp_classical}
    There exists an oracle $O$ such that $\compclass[O]{BQP}\not\subseteq \compclass[O]{PostBQP}_{\textup{Classical}}$.
\end{theorem}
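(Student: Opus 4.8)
The plan is a Forrelation-based oracle diagonalization, extending the oracle separation between $\bqp$ (with quantum queries) and $\postbpp$. Fix a fast-growing size function, say $n(m) = 2^m$, and let the oracle $O$ encode, for every input length $m$, a string $x^{(m)} \in \{0,1\}^{n(m)}$; formally $O$ answers $O(1^m, i) = x^{(m)}_i$ for indices $i$ consisting of $\log n(m) = m$ bits. Each $x^{(m)}$ will be a \emph{typical} sample from either $\mathcal{F}$ or $\mathcal{U}$ of Theorem~\ref{theorem_forrelation_k_wise} on $\{0,1\}^{n(m)}$, and we define the language $L_O = \{1^m : x^{(m)} \text{ is forrelated}\}$ via the (robust) forrelation predicate, so that a typical $\mathcal{F}$-sample has $1^m \in L_O$ and a typical $\mathcal{U}$-sample has $1^m \notin L_O$.

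\emph{Upper bound.} For \emph{any} oracle of this form, $L_O \in \compclass[O]{BQP}$: on input $1^m$ the index register has $m$ qubits, so a $\mathrm{poly}(m)$-time machine can prepare the uniform superposition over indices, make the single superposed query to $x^{(m)}$ guaranteed by Theorem~\ref{theorem_forrelation_k_wise}, and thereby distinguish $\mathcal{F}$ from $\mathcal{U}$ with constant advantage; a constant number of independent repetitions boosts the success probability to $2/3$.

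\emph{Lower bound.} Enumerate the classically-querying postselecting machines $M_1, M_2, \dots$ with their $2/3$--$1/3$ gaps, where $M_j$ runs in time $q_j(m) = \mathrm{poly}(m)$ on input $1^m$ and hence makes at most $q_j(m)$ classical queries. We build $O$ in stages, committing longer and longer initial segments of the oracle. At stage $j$ we pick a length $m_j$ larger than every length touched so far and commit every length other than $m_j$ that $M_j$ could reach on input $1^{m_j}$ to typical instances (any choice is fine, since $L_O$ at those lengths is simply whatever the predicate reports). With those values hard-wired, $M_j$ on input $1^{m_j}$ becomes a genuine $\postbqpclassical$ algorithm that makes at most $q_j(m_j)$ queries to the single unknown instance $x^{(m_j)} \in \{0,1\}^{n(m_j)}$. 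Since $q_j(m_j) = \mathrm{poly}(m_j) = \mathrm{poly}(\log n(m_j)) = n(m_j)^{o(1)}$, Theorem~\ref{theorem_forrelation_k_wise} gives that $\mathcal{F}$ and $\mathcal{U}$ on $\{0,1\}^{n(m_j)}$ are $o(1)$-almost $q_j(m_j)$-wise equivalent, and conditioning each on the $(1-o(1))$-probability event of being typical preserves this. Applying Lemma~\ref{lemma:no_postbqpclassical_algorithm} to this conditioned pair of distributions and the forrelation predicate, $M_j$ cannot decide it: there is a typical instance $x^\ast$ — forrelated or not — on which $M_j$'s acceptance probability lies on the wrong side of the $2/3$--$1/3$ threshold relative to $L_O$. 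Setting $x^{(m_j)} = x^\ast$ makes $M_j$ fail to decide $L_O$ at length $m_j$. After all stages, the resulting oracle $O$ satisfies $L_O \in \compclass[O]{BQP} \setminus \compclass[O]{PostBQP}_{\textup{classical}}$; and since $\compclass[O]{BQP} \subseteq \compclass[O]{PP}$ relativizes, the same $O$ also witnesses $\compclass[O]{PP} \not\subseteq \compclass[O]{PostBQP}_{\textup{classical}}$, giving Result~4.

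\emph{Main obstacle.} The delicate point is interfacing Lemma~\ref{lemma:no_postbqpclassical_algorithm} — which concerns a fixed $k$-query algorithm on a fixed promise problem — with the stagewise construction: one must check that hard-wiring the oracle at all other input lengths genuinely reduces $M_j$ to a bounded-query postselecting algorithm on one Forrelation instance, that the $o(1)$-almost $k$-wise equivalence survives conditioning on typicality so that the lemma's promise-problem hypothesis is met, and that the padding $n(m)=2^m$ is chosen large enough that $\mathrm{poly}(m)$ queries still fall within the $n^{o(1)}$ regime of Theorem~\ref{theorem_forrelation_k_wise}. The remaining work — ensuring the committed segments grow consistently and that the finitely many choices made at each stage do not interfere with later stages — is routine bookkeeping.
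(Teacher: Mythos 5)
Your proposal matches the paper's proof, which obtains the separation in exactly this way: combine Theorem~\ref{theorem_forrelation_k_wise} with Lemma~\ref{lemma:no_postbqpclassical_algorithm} in the regime $k=\mathrm{polylog}(n)=n^{o(1)}$ and then diagonalize. You have simply spelled out the stagewise diagonalization and the typicality/conditioning bookkeeping that the paper compresses into the phrase ``standard diagonalization techniques.''
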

\begin{proof}
    We may combine~\cref{theorem_forrelation_k_wise} with~\cref{lemma:no_postbqpclassical_algorithm} in the case that $k=\textup{polylog} (n) = n^{o(1)}$. Then, by standard diagonalization techniques, we obtain the oracle separation.
\end{proof}

Interestingly, it appears that even it might not hold that $\postbpp\subseteq \postbqpclassical$ with respect to all oracles, even though \postbpp may only make classical queries.

\begin{observation}\label{observation:implication_for_np_pp_in_bpp_pp}
    If $\compclass[O]{PostBPP}\subseteq \compclass[O]{PostBQP}_{\textup{classical}}$ with respect to any oracle $O$, then $\compclass[\pp]{NP} \subseteq \compclass[\pp]{BPP}$.
\end{observation}
\begin{proof}
    The proofs that $\compclass{NP}\subseteq\postbpp$ and $\postbqpclassical\subseteq \compclass{CorrBQP}_{\textup{classical}}$ relativize. Suppose that the proposition holds. Then we have that $\compclass[O]{NP}\subseteq\compclass[O]{CorrBQP}_{\textup{classical}}$ with respect to any oracle $O$, meaning that $\compclass[\pp]{NP}\subseteq\compclass[\pp]{CorrBQP}_{\textup{classical}}\subseteq \compclass[\compclass{CorrBQP}]{CorrBQP}_{\textup{classical}} = \corrbqp = \compclass[\pp]{BPP}$.
\end{proof}

Intuitively, the reason why $\compclass[O]{PostBPP}\subseteq \compclass[O]{PostBQP}_{\textup{classical}}$ is not straightforward is due to the fact that the postselection in \postbpp is based on a machine outputting $1$, given a random string as an input (i.e. we are postselecting on the strings for which the machine outputs 1). In this case, that machine may make classical queries to the oracle. If we try to simulate this postselection using \postbqp, we must run a computation with all the random strings in a superposition. However, that means that each query may be conditioned on the random string we are simulating, meaning that it would have to be in superposition. Finally, note that in~\cref{observation:implication_for_np_pp_in_bpp_pp}, the assumption could be weakened to the question of whether $\compclass{NP}\subseteq \compclass{CorrBQP}_{\textup{classical}}$ relativizes.

\section{Adaptive Non-Collapsing Measurements}

In~\cite{revisiting_bqp_with_noncollapsing_measurements}, the authors extended the Adversary lower-bounding technique~\cite{adversary_original} to \pdqp, quantum computation modified with the ability to perform non-collapsing measurements. As a common theme in our paper is the role of adaptivity, we study \adpdqp, a version of \pdqp which may adapt its computation based on the non-collapsing measurement outputs, by extending the Adversary lower-bounding technique to it. As we are introducing a version of the model which was not previously studied, let us formally define it.

\subsection{Definition}

\cite{space_above_bqp} defined the class \pdqp as follows. A polynomial-time Turing machine $T$ is given access to an oracle $O$ which takes as an input a list of unitaries $\{U_i\}_{i\in [k]}$ and measurement operators $\{M_i\}_{i\in [k]}$. $T$ makes a \emph{single} query the oracle $O$, which runs the sequence of unitaries and measurement operators. At each step $i$, the oracle obtains an independent sample $s_i$ distributed according to the measurement probabilities of the current state of the algorithm $\ket{\psi_i}$. At the end of the computation, the oracle returns the measurement results $\{s_i\}_{i\in [k]}$, which $T$ processes and outputs. Note that the power of the model lies in the ability to perform a (collapsing) partial measurement, followed by non-collapsing measurements (without the partial measurement operators $M_i$, this process can be simulated by \bqp).

\begin{figure}
    \centering
    \input{TikZ/adpdqp_original}
    \caption{Visualization of an \adpdqp algorithm. If one removed the \textcolor{orange}{orange} lines (i.e. the effect of $v_1$ and $v_2$), it would become a \pdqp algorithm.}
    \label{fig:adpdqp_original}
\end{figure}

Furthermore,~\cite{space_above_bqp} introduced an adaptive version of \pdqp which they call \compclass{CQP}. The difference is that $T$ machine may make \emph{multiple adaptive} queries the oracle $O$. Therefore, based on the previous non-collapsing measurements, they may adjust the unitaries sent to $O$. We define a stronger adaptive version of \pdqp, which we call \adpdqp. When the oracle returns the non-collapsing measurement samples, we allow it to retain the current state $\ket{\psi}$ for the next round of computation. From the oracle's point of view, it may ask $T$ which unitary it may apply based on the latest non-collapsing measurement result. A visualization of this process may be found in~\cref{fig:adpdqp_original}. Let us formally define the class.

\begin{definition}[\adpdqp]\label{def:adpdqp}
    A language $L$ is said to be in \adpdqp if there exists a uniform polynomial-time Turing machine $T$ with \emph{adaptive} access to the oracle $O$ which takes as an input a quantum state $\ket{\psi}$ and a description of a quantum algorithm $A = (U_1, M_1,\dots , U_{\textup{poly}(n)})$, runs the algorithm on $\ket{\psi}$ and returns the resulting state along with non-collapsing measurement samples obtained after each unitary, such that,
    \begin{itemize}
        \item If $x\in L$, then $T^O(x)$ accepts with probability at least $2/3$
        \item If $x\not\in L$, then $T^O(x)$ accepts with probability at most $1/3$
    \end{itemize}
\end{definition}

Note that in the definition of the oracle $O$ above, since $T$ is classical, it cannot interact with the output state - it may solely provide it to the oracle. When no state is given, $O$ assumes that the input is $\ket{0}$. Therefore, $\compclass{CQP} \subseteq \adpdqp$. Furthermore, without loss of generality, we may assume that each query to $O$ only contains a single unitary and and measurement operator, as is visualized in~\cref{fig:adpdqp_original}.

We provide an alternative formulation of the \adpdqp model. Before we do so, let us explain the alternative formulation proposed in~\cite{revisiting_bqp_with_noncollapsing_measurements}.
Specifically, one may view \pdqp through the lens of correlated measurements. Suppose some \pdqp algorithm $A$ performs $P$ non-collapsing measurements. We may simulate each non-collapsing measurement separately by applying the same set of unitaries in parallel. However, $A$ also contains partial measurements $M_i$. In order to simulate the model exactly, we must ensure that across each copy simulating a non-collapsing measurement $M_i$ collapses to the same value. This is exactly where correlated measurements may be used. By applying a correlated measurement on each copy where the main computation is the leader state, we may simulate the \pdqp computation exactly. This is shown in~\cref{fig:adpdqp_simulation} where the $C^*_i$ gates represent a correlated measurement with register $3$ being the leader.

With a slight tweak, we may do so with \adpdqp as well. Suppose that we performed a non-collapsing measurement, meaning that we measured the register $r$ simulating the non-collapsing measurement output to some value $a$. Therefore, the next unitary may depend on $a$. Let us denote it $U_a$. Therefore, when running the simulation above, each copy of the algorithm may apply a single unitary which, controlled on $r$ being $a$, applies $U_a$, over all possible outputs $a$. The orange lines in~\cref{fig:adpdqp_simulation} represent this modification. This formulation is crucial for the proof of our lower-bounding technique. As we are using fidelity as the progress measure, we may use the facts that fidelity is invariant under the application of unitaries and measurement cannot decrease fidelity. On the other hand, as $r$ is solely used as a control register, we never entangle other simulations of non-collapsing measurements, meaning that we may use most of the same tools used in~\cite{revisiting_bqp_with_noncollapsing_measurements} which relied on unentangled copies. This emphasizes the difference between \adpdqp and \compclass{CBQP} for which~\cite{revisiting_bqp_with_noncollapsing_measurements} obtained an exponentially-weaker bound exactly due to the fact that copies could become entangled.

\begin{figure}
    \centering
    \input{TikZ/adpdqp_simulation}
    \caption{Visualization of our formulation of a \adpdqp algorithm. The orange lines represent conditioning based on the non-collapsing measurements. Note that the $C^*_i$ measurement operators are the correlated measurements where the top register is the leader. Without the \textcolor{orange}{orange} lines, it would become a \pdqp algorithm.}
    \label{fig:adpdqp_simulation}
\end{figure}

\subsection{Adversary Lower-Bound}

Let us prove the lower-bounding technique for \adpdqp. Before we dive into the statement, let us emphasize that the query complexity of any version of \pdqp must measure both the number of queries $Q$ and non-collapsing measurements $P$. If this was not the case, then any query problem could be trivially solved by making a single query in superposition, followed by repeating non-collapsing measurements until we learned the entire function.

\begin{theorem}\label{thm:adversary_bound_adpdqp}
    Consider a function $f:\{0,1\}^n\rightarrow \{0,1\}$ and the sets $X\subseteq f^{-1}(0)$, $Y\subseteq f^{-1}(1)$. Given a fixed relation $R\subseteq X\times Y$, define the variables $m,l$ as,
    \begin{align*}
        m &= \min_{x\in X} \abs{\{y\in Y: (x,y)\in R \}} \\
        l &= \max_{\substack{x\in X\\ i\in [n]}} \abs{\{y\in Y: (x,y)\in R\text{ and } x(i)\neq y(i)\}}
    \end{align*}
    Define $m^\prime, l^\prime$ similarly, except $x\in X$ and $y\in Y$ are interchanged. For any \adpdqp algorithm which makes $Q$ queries and $P$ non-collapsing measurements,
    \begin{align*}
        QP = \Omega \left(\sqrt{\frac{mm^\prime}{ll^\prime}} \right)
    \end{align*}
\end{theorem}
Before we prove~\cref{thm:adversary_bound_adpdqp}, let us prepare several key statements. First, we will use the following statement which bounds the effect of a parallel query across copies of states on the fidelity between the states.

\begin{lemma}[Proof of Lemma 4.2 in~\cite{revisiting_bqp_with_noncollapsing_measurements}]\label{lem:bound_copies}
    Suppose that you are given oracle access to inputs $x,y \in \{0,1\}^n$. Describe the current states of the algorithm as $\ket{\psi_x} = \sum_j \alpha_{x,j} \ket{j}_q\otimes \ket{\phi_{x,j}}_w$. Let $\rho_x = \ket{\psi_x}^{\otimes c}$ and $\rho_x^\prime$ be the application $O_x$ to all $c$ query registers $q$ in parallel in $\rho_x$. Then,
    \begin{align*}
        F(\rho_x,\rho_y) - F(\rho_x^\prime, \rho_y^\prime) \leq 2c \sum_{j: x(j)\neq y(j)} \abs{\alpha_{x,j}}\abs{\alpha_{y,j}}
    \end{align*}
\end{lemma}

Furthermore, we will be using the variable $\Phi(t)$ to describe the progress made after $t$ queries, defined as,
\begin{align*}
    \Phi(t) = \sum_{(x,y)\in R} F(\rho_{x,t}, \rho_{y,t})
\end{align*}
where $\rho_{x,t}$ is the state of the algorithm after $t$ queries when given oracle access to the input $x\in \{0,1\}^n$ and $F$ denotes the fidelity measure, defined as $F(a,b) = \|\sqrt{a} \sqrt{b}\|_1$ where $a,b$ are density matrices. We will use the following statement which relates $\Phi(t)$ with the variables $m,m^\prime, l,l^\prime$.

\begin{claim}[Proof of Lemma 4 in~\cite{Amb04}]\label{clm:weight_on_query}
    Let $\beta_{x,t,i}$ be the amplitude weight on querying $j$ at some step $t$ of the algorithm. Then,
    \begin{align*}
        \sum_{\substack{(x,y)\in R\\ j: x(j)\neq y(j)}} 2 \beta_{x,t,j} \beta_{y,t,j} \leq \sqrt{\frac{l l^\prime}{m m^\prime}} \abs{R}
    \end{align*}
\end{claim}

Lastly, let us mention several results about the fidelity measure $F$.

\begin{claim}[Lemma 3.5 in~\cite{revisiting_bqp_with_noncollapsing_measurements}]\label{clm:fidelity_correlated_measurements}
    Let $\rho_x, \rho_y$ be to two states and $\rho_x^\prime, \rho_y^\prime$ be $\rho_x, \rho_y$ after applying a correlated measurement on $c$ copies of $\rho_x, \rho_y$. Then,
    \begin{align*}
        F((\rho_x)^{\otimes c}, (\rho_y)^{\otimes c}) \leq F((\rho_x^\prime)^{\otimes c}, (\rho_y^\prime)^{\otimes c})
    \end{align*}
\end{claim}

Note that~\cite{revisiting_bqp_with_noncollapsing_measurements} does not mention correlated measurements, but the action $M^*$ described in the original statement is a correlated measurement over exact copies of states.

\begin{claim}\label{clm:fidelity_on_cq}
    Suppose that we are given two mixed states $\rho_x, \rho_y$ which may be written as $\rho_x = \sum_{\tau} p_{x,\tau} \ketbra{\tau}{\tau} \otimes \sigma_{x,\tau}$ where $\tau$ is an orthonormal basis. Then,
    \begin{align*}
        F(\rho_x, \rho_y) = \sum_{\tau} \sqrt{p_{x,\tau} p_{y,\tau}} F(\sigma_{x,\tau}, \sigma_{y,\tau})
    \end{align*}
\end{claim}
\begin{proof}
    Recall that $F(a,b) = \|\sqrt{a} \sqrt{b}\|_1$. As $\sqrt{\rho_x} = \sum_\tau \sqrt{p_{x,\tau}} \ketbra{\tau}{\tau} \otimes \sqrt{\sigma_{x,\tau}}$, we have that,
    \begin{align*}
        F(\rho_x, \rho_y) &= \norm{\sqrt{\rho_x} \sqrt{\rho_y}}_1 = \sum_{\tau} \sqrt{p_{x,\tau} p_{y,\tau}} \norm{\sqrt{\sigma_{x,\tau}}\sqrt{\sigma_{y,\tau}} }_1\\
        &= \sum_{\tau} \sqrt{p_{x,\tau} p_{y,\tau}} F(\sigma_{x,\tau}, \sigma_{y,\tau})\qedhere
    \end{align*}
\end{proof}

We may proceed with the proof.

\begin{proof}[Proof of~\cref{thm:adversary_bound_adpdqp}]
    Consider an arbitrary \adpdqp algorithm $A$. After each non-collapsing measurement, we may update the computation based on the result. To track the changes, for $i\in [P]$, let $\tau_i$ denote the current transcript of non-collapsing measurements made. We are going to assume that right before making query $t$, we have made $i$ non-collapsing measurements. Therefore, we may describe the state of the algorithm before query $t$ as,
    \begin{align*}
        \rho_{x,t} = \sum_{\tau_i \in \{0,1\}^{ir}} p_{x,t}(\tau_i) \ketbra{\tau_i}{\tau_i} \otimes \sigma_{x,t,\tau_i}
    \end{align*}
    where $r$ is the number of registers and $p_{x,t}(\tau_i)$ denotes the probability of obtaining the transcript $\tau_i$ after $t$ queries to $O_x$. Let $\Phi(t)$ be the progress measure, defined as,
    \begin{align*}
        \Phi(t) &= \sum_{(x,y)\in R} F\left(\rho_{x,t}, \rho_{y,t}\right)\\
            &= \sum_{(x,y)\in R} \sum_{\tau_i} \sqrt{p_{x,t}(\tau_i) p_{y,t}(\tau_i)} F\left(\sigma_{x,t,\tau_i}, \sigma_{y,t,\tau_i}\right)\numberthis\label{eq:fidelity_cq_bound}
    \end{align*}
    where~\cref{eq:fidelity_cq_bound} uses~\cref{clm:fidelity_on_cq}. Furthermore, by~\cref{clm:fidelity_correlated_measurements}, if we condition on some transcript $\tau_i$, we may assume that the current state $\sigma_{x,t,\tau_i}$ is a pure state composed of $P-i$ copies of the same state. We may describe a single copy of such state as,
    \begin{align*}
        \ket{\psi_{x,t}^{\tau_i}} = \sum_j \alpha_{x,t}^{\tau_i}(j)\ket{j}_q \ket{\phi_{x,t}^{\tau_i}(j)}_w
    \end{align*}
    where $q$ denotes the query register and $w$ the workspace register. Let $\Tilde{\alpha}_{x,t} (\tau_i, j) = \sqrt{p_{x,t}(\tau_i)} \alpha_{x,t}^{\tau_i}(j)$. By adjusting the amplitudes of $\ket{\psi_{x,t}^{\tau_i}}$ with the weight of the transcript $\tau_i$, using~\cref{lem:bound_copies} and~\cref{eq:fidelity_cq_bound},
    \begin{align*}
        \Phi(t-1) -\Phi(t) &\leq 2(P-i) \sum_{\substack{(x,y)\in R\\j: x(j)\neq y(j)}} \sum_{\tau_i} \abs{\Tilde{\alpha}_{x,t} (\tau_i, j)} \abs{\Tilde{\alpha}_{y,t} (\tau_i, j)}\\
        &\leq 2P \sum_{\substack{(x,y)\in R\\j: x(j)\neq y(j)}} \sqrt{\sum_{\tau_i} \abs{\Tilde{\alpha}_{x,t} (\tau_i, j)}^2} \sqrt{\sum_{\tau_i} \abs{\Tilde{\alpha}_{y,t} (\tau_i, j)}^2}\numberthis\label{eq:cauchy_schwartz}\\
        &=  2P \sum_{\substack{(x,y)\in R\\j: x(j)\neq y(j)}} \beta_{x,t}(i) \beta_{y,t}(i)\numberthis\label{eq:label_cs}\\
        &\leq P \sqrt{\frac{l l^\prime}{m m^\prime}} \abs{R}\numberthis\label{eq:bound_on_query_weight}
    \end{align*}
    where~\cref{eq:cauchy_schwartz} is an application of the Cauchy-Schwartz inequality and~\cref{eq:label_cs} is a change of variables $\beta_{x,t}(i) = \sqrt{\sum_{\tau_i} \abs{\Tilde{\alpha}_{x,t} (\tau_i, j)}^2}$. Notice that $\beta_{x,t}(i)^2$ is the probability of measuring $i$ in the query register $q$. Therefore,~\cref{eq:bound_on_query_weight} follows from~\cref{clm:weight_on_query}. Thus, we have that,
    \begin{align*}
        \Phi(0)-\Phi(Q) \leq \sum_{t=1}^{Q}(\Phi(t-1) - \Phi(t)) \leq PQ \sqrt{\frac{l l^\prime}{m m^\prime}} \abs{R}\numberthis\label{eq:telescope}
    \end{align*}

    Assume that the success probability is $1-\epsilon$. In order to solve the problem successfully, we require $\Phi(Q) \leq 2\sqrt{\epsilon (1-\epsilon}\abs{R}$. With $\epsilon = 1/3$, as $\Phi(0) = \abs{R}$, this means that $\Phi(0) - \Phi(Q) \geq \Omega(\abs{R})$. Combined with`\cref{eq:telescope}, this implies $QP = \Omega (\sqrt{mm^\prime/ll^\prime})$.
\end{proof}

\cref{thm:adversary_bound_adpdqp} exactly extends the adversary method for \pdqp, meaning we obtain the same lower-bounds (see Section 7.1 in~\cite{revisiting_bqp_with_noncollapsing_measurements}). For example, we have a $\Omega(n^{1/4})$ lower-bound on unstructured search, $\Omega(n^{1/2})$ lower-bound on both parity and majority, and $\Omega(n^{1/4})$ lower-bound on the element distinctness problem. Therefore, we have oracle separations between \adpdqp and \compclass{NP}, \compclass{PP}, and \compclass{\oplus P}.
Furthermore, note that~\cref{thm:adversary_bound_adpdqp} may be lifted to the stronger positive-weighted adversary method by introducing a weight coefficient in the progress measure $\Phi(t)$.

\section{Discussion}

There is a variety of open questions in this direction, most of which stem from our lack of understanding of the space between \pp and \pspace. Let us list several of them.

\begin{itemize}
    \item \cref{thm:equality_theorem} exactly characterizes \compclass[\pp]{P} and \compclass[\pp]{BPP} using \majbqp and \admajbqp. Is there a way to \say{derandomize} \compclass[\pp]{BPP} to \compclass[\pp]{P}? Alternatively, is it possible to separate them with respect to an oracle?
    \item We find that both search and parity may be solved efficiently using correlated measurements. Which problems have a high query complexity in this model? Finding such problems could provide us the largest class with a classical oracle separation from \pspace (At the moment it is $\compclass[\compclass{PH}]{PP}$~\cite{parity_p_oracle_pp_ph}).
    \item Are there other metaphysical modifications of \bqp which are equal to different classes between \pp and \pspace, such as $\compclass[\pp]{BQP}$?
    \item Are any of such modifications mentioned above self-low with respect to quantum queries? As mentioned in~\cref{lemma_ch_collapse_implications}, this would imply the collapse of \countinghierarchy.
    \item Is there a statement which would address the power of \compclass{PP/poly} in a similar way to how the Karp-Lipton addresses \compclass{P/poly}~\cite{karp_lipton}?
    \item Can \corrbqp be used to strengthen the current quantum-classical version of Toda's theorem, $\compclass{QCPH}\subseteq \compclass[\mathsf{PP}^{\mathsf{PP}}]{P}$~\cite{qph_original}?
    \item Does $\adpdqp = \pdqp$? Alternatively, does there exist an oracle separating them? As was mentioned in~\cite{space_above_bqp}, \adpdqp has the ability to clone states up to polynomial-accuracy, whereas \pdqp does not. Does this aid the computation?
\end{itemize}

\printbibliography

\end{document}